\newcommand{\XX}{\mathcal{X}}
\newcommand{\YY}{\mathcal{Y}}
\newcommand{\LL}{\mathcal{L}}
\newcommand{\WW}{\mathcal{W}}
\newcommand{\DD}{\mathcal{D}}
\newcommand{\A}{\mathcal{A}}
\newcommand{\B}{\mathcal{B}}
\newcommand{\C}{\mathcal{C}}
\newcommand{\HH}{\mathcal{H}}
\newcommand{\KK}{\mathcal{K}}
\newcommand{\E}{\mathbb{E}}
\newcommand{\D}{\mathbb{D}}
\newcommand{\V}{\mathbb{V}}
\newcommand{\R}{\mathbb{R}}
\newcommand{\gG}{\mathfrak{G}}
\newcommand{\Ss}{\mathcal{S}}
\newcommand{\ns}{\mathit{ns}}
\newcommand{\co}{\mathit{co}}
\newcommand{\opt}{\mathrm{opt}}
\newcommand{\Ber}{\mathrm{Ber}}
\newcommand{\CHSH}{\mathit{CHSH}}
\DeclareMathOperator*{\argmax}{arg\,max}
\newcommand{\val}{\text{val}}
\newcommand{\tr}{\text{tr}}
\newtheorem{theorem}{Theorem}
\newtheorem{lemma}[theorem]{Lemma}
\newtheorem{definition}[theorem]{Definition}
\newtheorem{proposition}[theorem]{Proposition}
\newtheorem{corollary}[theorem]{Corollary}
\def\BibTeX{{\rm B\kern-.05em{\sc i\kern-.025em b}\kern-.08em
    T\kern-.1667em\lower.7ex\hbox{E}\kern-.125emX}}
\begin{document}

\title{Beware of Greeks bearing entanglement?  Quantum covert channels, 
information flow and non-local games \thanks{The author is supported by FNR under
grant INTER FNRS/15/11106658/SeVoTe.}
}

\author{David Mestel \\ University of Luxembourg \\ david.mestel@uni.lu}

\maketitle

\begin{abstract}
Can quantum entanglement increase the capacity of (classical) covert channels?
To one familiar with Holevo's Theorem it is tempting to think that the answer is
obviously no. However, in this work we show: quantum entanglement can in fact
increase the capacity of a classical covert channel, in the presence of an
active adversary; on the other hand, a zero-capacity channel is not improved by
entanglement, so entanglement cannot create `purely quantum' covert channels;
the problem of determining the capacity of a given channel in the presence of
entanglement is undecidable; but there is an algorithm to bound the entangled
capacity of a channel from above, adapted from the semi-definite hierarchy from
the theory of non-local games, whose close connection to channel capacity is at
the core of all of our results.
\end{abstract}

%\begin{IEEEkeywords}
%component, formatting, style, styling, insert
%\end{IEEEkeywords}

\section{Introduction}\label{sec:intro}
Suppose that you are processing sensitive data using a computer. How do you know
that your computer was not given to you in a state of quantum entanglement with
an eavesdropping adversary? This is a situation that (unlike the presence of
Greek soldiers) cannot be detected by any local experiment.  It does not require 
the victim to be using any kind of quantum technology.\footnote{Although we must 
acknowledge that it would require the adversary to have technical capabilities 
beyond those publicly known, since holding systems in superposition is 
currently a sensitive, fragile and usually short-lived affair.}

Fortunately, the presence of entanglement does not of itself jeopardise the 
privacy of one's data.  This is due to the `non-signalling' property of 
entanglement: although the adversary is able to obtain non-classical correlations 
with the victim's measurement outcomes, this does not allow him to deduce anything
about what those measurements were (otherwise distant entanglement would enable
faster-than-light communication).  But what if the adversary \emph{also} has 
access to some legitimate interaction with the victim, such as use of a shared 
resource?  Is it possible for entanglement to create a covert channel where none 
would otherwise exist, or to increase the power of an existing channel?  The 
purpose of the present work is to address this question.

Holevo's Theorem~\cite{holevo73theorem} states (in relevant
part) that entanglement cannot increase the classical Shannon capacity of a purely 
classical discrete memoryless channel.  It is therefore tempting to assume that 
this means the answer to the above question is `no'; however, as we shall see, 
in fact the picture is rather more complex.

An abstracted representation of a system which may or may not give rise to a
covert channel is shown in Figure \ref{fig:flowfig}. A victim, Alice, interacts
with some system $\C$, to which access is also given to an eavesdropper, Bob. Bob may
only receive messages from the system (a passive adversary), or he may also be
able to send messages (an active adversary). We say that a covert channel exists
if Bob is able to learn something about Alice's actions from his observations;
this is set out formally in the classic paper of Goguen and
Meseguer~\cite{goguen1982security}. Note that we make the assumption that Alice
is actively trying to convey information to Bob. This may be because she
(perhaps a malicious process) is trying to exfiltrate data across what should be
an information flow barrier. Alternatively she may be an innocent victim (in
this situation the covert channel is often called a `side-channel'), but if her
behaviour is not specified then a conservative analysis must assume that she
could behave as if trying to exfiltrate data.

It may be, however, that the question just of whether Bob can learn
\emph{anything} is too crude, and we may be interested in \emph{how much}
information can reach Bob from Alice; this is the subject of the field of
`Quantitative Information Flow' (QIF). The original approach~\cite{gray1992toward}
was to compute the Shannon mutual information between Alice's actions and Bob's
observations, but it was pointed out by Smith~\cite{smith2009foundations} that
this is usually inappropriate. This has given rise to extensive study of various
possible measures of information flow; see the recent
book~\cite{alvim2020information}. However, in this work we will mainly (with the
exception of Section \ref{sec:sdp}) be agnostic as to the choice of measure,
subject to mild reasonableness conditions.

The goal of QIF is essentially to analyse Figure \ref{fig:flowfig} in 
quantitative fashion.  The goal of this paper is to extend this analysis to the 
situation where Alice and Bob may share entanglement.  We define information 
flow in this setting and then address some fundamental questions.  Can 
entanglement make any difference?  Can we tell how much?  Can entanglement 
introduce covert channels where none existed before?

\subsection*{Overview}

The structure of this paper is as follows. In Section \ref{sec:prelim} we set
out basic concepts and definitions, for both classical information flow and
quantum entanglement. In Section \ref{sec:entangledcap} we define an entangled
version of information flow. We then introduce the reader to `non-local games',
an important concept from Quantum Information Theory that provides the technical
machinery for many of our results, and show by a simple reduction from a certain
game (the `CHSH game') that it is possible for entangled capacity to exceed
classical capacity (Theorem \ref{thm:quantad}). In Section
\ref{sec:purelyquant}, on the other hand, we show that if a channel has zero
classical capacity then it also has zero entangled capacity (Corollary
\ref{cor:purelyquant}), and so it is not possible for entanglement to introduce
`purely quantum' covert channels. In Section \ref{sec:noncomp}, we consider the
problem of computing the entangled capacity of a given channel, and show using
the very recent breakthrough result $\text{MIP}^*=\text{RE}$~\cite{ji2020mipre}
that the problem of computing this capacity, even to within a constant factor
approximation, is undecidable (Theorem \ref{thm:capnoncomp}). More positively,
in Section \ref{sec:sdp} we show that the Semi-Definite Programming (SDP)
methods~\cite{navascues2008convergent} for bounding the value of non-local games
can be adapted to give upper bounds for entangled channel capacity. Finally in
Section \ref{sec:conclusion} we reflect on the connection between covert channel
capacity and non-local games, and consider future directions for quantum QIF.

\begin{figure}[t]
\centering
\includegraphics[width=0.4\textwidth]{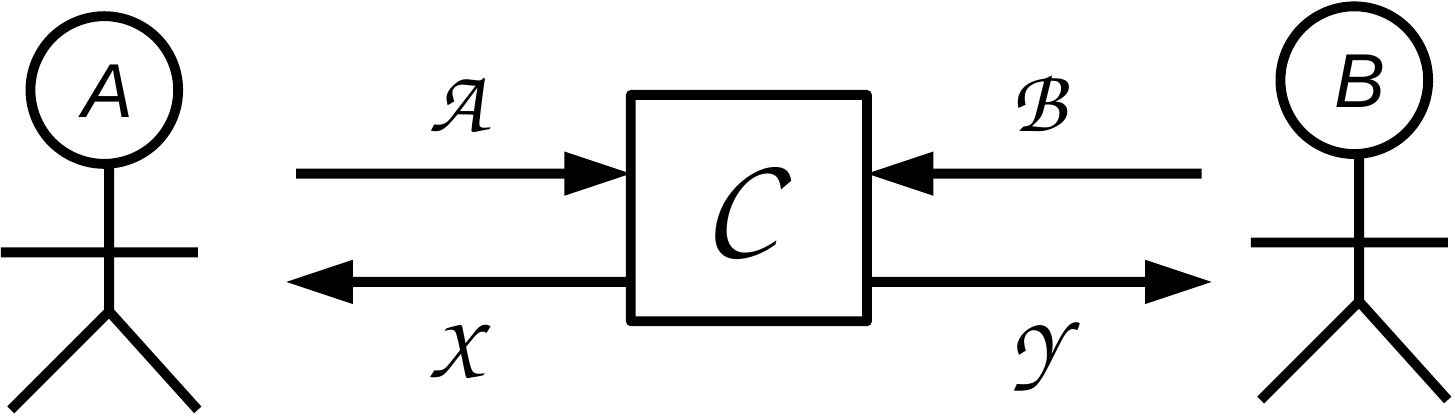}
\caption{An abstracted system}\label{fig:flowfig}
\end{figure}

\subsection*{Related work}

As far as we are aware, the only work which has attempted to extend QIF into the 
quantum realm is the paper of Am{\'e}rico and Malacaria~\cite{americo2020qqif}.
This studies a rather different setting, in which Alice sends to Bob a quantum 
state $\rho^x$ which is a (specified) function of the secret value $x\in\XX$; 
Bob is then allowed to apply a single measurement of his choice from a fixed 
set of allowed measurements, and the question is how much Bob can learn about 
the secret $x$ according to various measures of information flow.  This is of 
course only relevant to a network in which quantum states can be passed around.

The question of communication channels and their capacity is of course central
to information theory, and quantum information theory is a huge topic in modern
physics (see for instance~\cite{wilde2013quantum}). However, perhaps
surprisingly the present setting of the classical capacity of a classical fully
interactive multi-round channel assisted by entanglement has not as far as we
can tell been previously studied (see also the more recent survey
\cite{gyongyosi2018survey}). This may be because physicists are generally more
interested in quantum channels (which allow quantum states to be sent and
received), or in the effect of entanglement on the quantum capacity of classical
channels (which surprisingly can be positive due to the technique of `quantum
teleportation'). Additionally, the idea of a fully interactive channel may not
seem particularly `physical', since it is fairly far from the classic setting of
a noisy communication medium; on the other hand such a situation is common in 
the setting of covert channels or side-channels arising from use of a shared 
resource or interaction with a common system.

\section{Preliminaries}\label{sec:prelim}

\subsection{Classical information flow}\label{sec:classflow}

Although many different models (at varying levels of abstraction) have been used
in other works to represent the behaviour of the system, for this paper we will 
adopt a simple abstract model, a finite-round version of the model from the 
author's prior work~\cite{mestel2019quantifying} (and a multi-round version of 
the model used in~\cite{alvim2017information}).  We assume that Alice and 
Bob interact with the system over $k$ rounds, at each round sending a message 
drawn from finite sets $\A$ and $\B$ respectively, and receiving in return 
messages from finite sets $\XX$ and $\YY$ respectively.  The behaviour of the 
system is then specified just by functions determining the probability 
distribution on output messages, based on the actions that have occurred up to 
that point:

\begin{definition}\label{def:chan}
An \emph{$n$-round abstract interactive channel} ($n$-IC) is given by finite sets
$\A,\B,\XX,\YY$ and an $n$-tuple $(f_1,\ldots,f_n)$ of functions
\[f_i:(\A\times \B \times \XX \times \YY)^{i-1}\times (\A\times \B) \rightarrow \D(\XX\times \YY).\]
\end{definition}

Note that $\D(\XX\times \YY)$ denotes the space of probability distributions on
the set $\XX\times \YY$. Note also that no generality is lost by using the same
finite sets for each round of interaction: to represent a system using sets
$\A_i,\B_i,\XX_i,\YY_i$ at round $i$, take $\A=\sqcup_i \A_i$ (the disjoint
union of the $\A_i$), and similarly for $\B,\XX$ and $\YY$. Choose arbitrary
$a_i\in \A_i$ and $b_i\in \B_i$, and set the images of the $f_i$ to be supported
only on $\XX_i\times \YY_i$ and treat inputs at round $i$ which are not in
$\A_i$ (respectively $\B_i$) as though they were $a_i$ (respectively $b_i$).

A simple example of such a system is a fair resource scheduler, which receives 
requests from Alice and Bob and (if both ask to use the resource) allocates the
resource to whichever has received it fewer times in the past (breaking ties 
randomly).  This has $\A=\B=\XX=\YY=\{0,1\}$, and $f_i(t,(1,0))=(1,0), 
f_i(t,(0,1)) = (0,1), f_i(t,(0,0)) = (0,0)$, and 
\[f_i(t,(1,1)) = \begin{cases}(1,0), & \text{if $\#_\XX(t) < \#_\YY(t)$} \\
(0,1), & \text{if $\#_\XX(t) > \#_\YY(t))$} \\
\tfrac{1}{2}(1,0) + \tfrac{1}{2}(0,1), & \text{if $\#_\XX(t) = \#_\YY(t)$,}
\end{cases}
\]
where $\#_\XX(t)$ and $\#_\YY(t)$ denote the number of positions in $t$ where 
the third (respectively fourth) co-ordinate is 1, and $(x,y)\in \D(\XX\times \YY)$ 
denotes the point distribution supported at $(x,y)$.  This system clearly does 
give rise to information leakage, because by always requesting use of the 
resource Bob is able to (imperfectly) learn about whether Alice has requested it.

Given the specification of a channel $\C$, we are interested in the possible 
ways Alice and Bob may interact with the system, which we denote by their 
\emph{strategies}.  Clearly Alice is unable to see the messages passing between 
the system and Bob, and so her strategy at each step is represented by a 
function on the transcript of her interaction so far; similarly for Bob.

\begin{definition}
Let $\C = (\A,\B,\XX,\YY,(f_1,\ldots,f_n))$ be an $n$-IC. A \emph{classical
$\A$-strategy} (respectively $\B$-strategy) for $\C$ is a tuple
$(g_1,\ldots,g_n)$ of functions
\[g_i:(\A\times\XX)^{i-1}\rightarrow \D(\A),\]
respectively $h_i:(\B\times \YY)^{i-1}\rightarrow \D(\B)$.  Denote the sets of 
such strategies by $\Ss_\A$ and $\Ss_\B$ respectively.
\end{definition}

Having thus fixed the strategies followed by Alice and Bob, we obtain a 
probability distribution on traces of the system execution: writing 
$s_A=(g_1,\ldots,g_n)$ and $s_B=(h_1,\ldots,h_n)$ we have that the trace 
$t=((a_1,b_1,x_1,y_1),\ldots,(a_n,b_n,x_n,y_n))$ occurs with probability
\[\prod_{i=1}^n g_i(\pi_A(t_{i-1}))(a_i)h_i(\pi_B(t_{i-1}))(b_i)f_i(t_{i-1},(a_{i},b_{i}))(x_i,y_i),\]
writing $t_i$ for the $i$th prefix of $t$ and $\pi_A$ and $\pi_B$ for the 
projections onto $(\A\times\XX)^*$ and $(\B\times\YY)^*$ respectively 
representing Alice and Bob's views of the system.  We denote the trace produced 
by strategies $s_A$ and $s_B$ by the random variable $T_{s_A,s_B}$.

Now that we have defined the behaviour of the system and the parties, we are
able to talk about information flow. We assume there is some secret about which
Bob wishes to learn, which we will denote by the random variable $K$; Alice's
strategy may depend in some way on the value of $K$. The question is, how much
more does Bob know about $K$ after the interaction than before? As outlined in
Section \ref{sec:intro} there are various possible ways to measure this, so our
definition (essentially the formalism of \cite{alvim2016axioms}) is
parameterised by a `vulnerability measure' $\V$ on probability distributions.

Before he and Alice interact with the system, Bob's knowledge of the secret will 
be limited to the prior distribution of the random variable $K$; we quantify this 
knowledge by the vulnerability of this distribution according to the vulnerability 
measure $\V$, which we denote by $\V(K)$.

On the other hand, after the interaction Bob will have observed some trace $t$ 
consisting of the messages passing between him and the system, and this allows 
him to update his beliefs about the secret to the \emph{posterior} distribution 
$K|\pi_B(T)=t$ (recall that Bob is only able to observe the projection of the 
whole system trace $T$ onto $(\B\times \YY)^n$, since he does not see the 
messages passing between the system and Alice).  In quantitative terms his 
knowledge of the secret is given by $\V(K|\pi_B(T)=t)$; we call the expected 
value of this quantitiy the `posterior $\V$-vulnerability' and the expected difference 
between prior and posterior $\V$-vulnerability (that is the expected amount of 
information gained by Bob) the `$\V$-leakage' of the channel with the given 
prior distribution on $K$.

\begin{definition}\label{def:vvul}
Let $\C$ be an $n$-IC, and $K$ a random variable taking values on the set $\KK$.  
Let $\phi_A:\KK\rightarrow \Ss_\A$ and $s_B\in \Ss_\B$.  Let $\V$ be any 
vulnerability measure.  The \emph{posterior $\V$-vulnerability of $K$ under 
$(\C,\phi_A,s_B)$} is given by
\begin{align*}\mathcal{V}_\V(K,(\C,&\phi_A,s_B))\\ &= \E_{t\sim \pi_B(T_{\phi_A(K),s_B})}\V\left(K|\pi_B(T_{\phi_A(K),s_B})=t\right).\end{align*}
The \emph{$\V$-leakage of $K$ under $(\C,\phi_A,s_B)$} is given by
\[\LL_\V(K,(\C,\phi_A,s_B)) = \mathcal{V}_\V(K,(\C,\phi_A,s_B))-\V(K).\]
\end{definition}

Note that the posterior distribution $K|\pi_B(T_{\phi_A(K),s_B})=t$ is 
straightforwardly given by Bayes' theorem
\[p_{K|\pi_B(T_{\phi_A(K),s_B})}(k|t) = \frac{p_{\pi_B(T_{\phi_A(K),s_B})|K}(t|k) 
p_K(k)}{p_{\pi_B(T_{\phi_A(K),s_B})}(t)}.\]

Some important examples of vulnerability measures:
\begin{itemize}
\item Shannon entropy: $\V(K) = -H_1(K) = \sum_k p_K(k)\log(p_K(k))$.  This 
gives a measure of leakage corresponding to mutual information.
\item Min-entropy~\cite{smith2009foundations}: $\V(K) = -H_{\infty}(K) = \log \max_k
p_K(k)$. This has a natural operational interpretation, as ($\log$ of) the
multiplicative improvement in Bob's probability of guessing the value of $K$ in
one try.
\item $g$-vulnerability~\cite{alvim2012measuring}: this is a family of
vulnerability measures, parameterised by a finite set of guesses $\WW$ Bob can
make, and a `gain function' $g:\WW\times\KK \rightarrow [0,1]$ giving the
reward to Bob for making guess $w$ if the true value was $k$. Then the expected
value of Bob's multiplicative gain is given by $\V$-leakage with $\V(K) = \log
\max_w p_K(k)g(w,k)$. We may also be interested in Bob's additive gain, which is
given by $\V(K) = \max_w p_K(k)g(w,k)$ (omitting the $\log$).
\end{itemize}

The definition of $\V$-leakage can be expressed more concisely using an analogue of 
Shannon mutual information (which gives the asymptotic capacity of a binary 
symmetric channel), parametrised by the vulnerability measure $\V$: if we 
define
\[I_\V(X;Y) = \E_{y\sim Y} \V(X|Y=y) - \V(X)\]
then we have that 
\[\LL_\V(K,(\C,\phi_A,s_B)) = I_\V(K;\pi_B(T_{\phi_A(K),s_B})).\]
Note that if $\V$ is Shannon entropy then $I_\V$ is Shannon mutual information; 
this is symmetric in $X$ and $Y$ but $I_\V$ is not in general symmetric for 
other vulnerability measures.

We then define the $\V$-capacity of the channel to be the maximum possible 
$\V$-leakage over all possible secrets $K$ and all possible behaviours for Alice 
and Bob.

\begin{definition}
Let $\C$ be an $n$-IC.  The \emph{classical $\V$-capacity of $\C$} is given by
\[\LL_\V(\C) = \sup_K\sup_{\phi_A:K\rightarrow \Ss_\A, s_B\in \Ss_\B} 
\LL_\V(K,(\C,\phi_A,s_B)).\]
\end{definition}

Throughout this paper, we will consider only vulnerability measures satisfying 
three healthiness conditions, which hold for all reasonable measures and which 
we will need to use in order to prove some of our results later on (in 
particular for the proof of Theorem \ref{thm:gameequiv}).  The first
healthiness condition we will call the \emph{composition inequality}.
Informally, this says that if we have a composition of channels $X\rightarrow Y
\rightarrow Z$ then the capacity of the channel from $X$ to $Z$ is not greater
than that of those from $X$ to $Y$ and from $Y$ to $Z$. Clearly this is a
property that any sensible vulnerability measure should have.

More formally, for random variables $X,Y,Z$, we say that they form a
\emph{Markov chain}, and write $X\rightarrow Y\rightarrow Z$ if we have
$p_{X,Y,Z}(x,y,z) = p_X(x)p_{Y|X}(y|x)p_{Z|Y}(z|y)$ (note that this property is
symmetric, so that $X\rightarrow Y \rightarrow Z$ if and only if $Z\rightarrow Y
\rightarrow X$; see \cite{elementsofinformation} Section 2.8). We say that a
vulnerability measure $\V$ satisfies the composition inequality if for every
Markov chain $X\rightarrow Y\rightarrow Z$ we have
\[I_\V(X;Z) \leq \sup_{Y'|X'=Y|X}I_\V(X';Y')\]
and 
\[I_\V(X;Z) \leq \sup_{Z'|Y'=Z|Y}I_\V(Y';Z').\]

This fact for Shannon entropy vulnerability follows from the 
data-processing inequality (\cite{elementsofinformation}, Theorem 2.8.1), and for 
min-entropy is Theorem 6 of~\cite{espinoza2013min}.  Note that the first inequality
without the supremum (which would bound information flow rather than capacity) can fail for 
min-entropy vulnerability (see Example 7 of~\cite{espinoza2013min}), but both 
hold for Shannon entropy.

The second healthiness condition we will require is that the vulnerability of a 
Bernoulli random variable is (strictly) less if it is closer to uniform.  That is, 
if $\rho,\rho'\in [0,1]$ with $|\rho - 1/2| < 
|\rho' - 1/2|$ then we have
\[\V(\Ber(\rho)) < \V(\Ber(\rho')),\]
where $\Ber(\rho)$ is the Bernoulli distribution with parameter
$\rho$.

The third and final assumption we make about $\V$ is that if we have a binary
symmetric channel then the best way to use it is to send a uniformly random bit.
More concretely, we assume that if $(X,Y)$ is a binary symmetric channel with
error probability $p$ then $I_\V(X;Y)$ maximised when $X\sim \Ber(1/2)$, in
which case the posterior is $\Ber(1-p)$, so we assume
\[I_\V(X;Y) \leq \V(\Ber(1-p)) - \V(1/2).\]

A consequence of the composition inequality is that without loss of generality
we may assume that Alice employs a deterministic strategy: indeed, we may
consider her source of randomness to be a random variable $X$ (so that she
employs a deterministic strategy on $K\times X$), and then we have that
$K\rightarrow K\times X \rightarrow \pi_B(T)$ is a Markov chain, so the capacity
of the channel given by her deterministic strategy on $K\times X$ is at least
that of the original strategy. Once we have that Alice uses a deterministic
strategy we may assume that $|\KK|$ is at most the size of the set of functions
$(\A\times \XX)^{<n}\rightarrow \A$, which in particular is bounded. We can
similarly show that Bob can be assumed to use a deterministic strategy (assume
his randomness is resolved before the interaction and pick the value of the seed
leading to the greatest leakage), and so the set of possible strategies is
finite and the classical capacity of a given channel is computable.

Returning to the toy example of the fair scheduler described near the beginning 
of this section, we can easily see that this has postive $\V$-capacity under 
any vulnerability measure $\V$ satisfying the healthiness conditions.  Indeed, 
let $K\sim U(\{0,1\})$, and let Bob's strategy $s_B$ be given by $h_i(t)=1$ for 
all $i,t$ (that is, Bob always asks to use the resource).  Define strategy $s_0$ 
for Alice by $g_i(t)=0$ for all $i,t$ (never asking for the resource) and 
strategy $s_1$ by $g_i(t)=1$ for all $i,t$ (always asking for the resource).
For $k\in \{0,1\}$, let $\phi_A(k)=s_k$.

Now if $K=0$ then Bob will always receive 1 from the system.  If $n\geq 2$ then 
by fairness if $K=1$ then Bob will always receive a 0 at least once, and so we 
have that $K|\pi_B(T_{\phi_A(K),s_B})$ is a point distribution for both $K=0$ 
and $K=1$, and so 
\begin{align*}
\LL_\V(\C) &\geq \LL_\V(K,(\C,\phi_A,s_B)) \\
&= \V(\Ber(1)) - \V(\Ber(1/2)) \\
&> 0
\end{align*}
by the second healthiness condition.

If $n=1$ then if $K=1$ Bob will receive a 0 or a 1 uniformly at random.  Hence 
if he receives a 0 he can deduce with certainty that $K=1$, but if he receives a 1 then his 
posterior is that $K=0$ with probability $\tfrac{1}{2}/(\tfrac{1}{2}+\tfrac{1}{4}) 
= \tfrac{2}{3}$ and $K=1$ with probability $\tfrac{1}{3}$.  Hence we have
\begin{align*}
\LL_\V(\C) &\geq \LL_\V(K,(\C,\phi_A,s_B)) \\
&= \left(\tfrac{1}{4} \Ber(1) + \tfrac{3}{4} \Ber(2/3)\right) - \V(\Ber(1/2) \\
&> 0,
\end{align*}
again by the second healthiness condition.  Of course these lower bounds for 
$\LL_\V(\C)$ are not tight; the optimal strategy and maximum leakage will 
depend on the choice of vulnerability measure $\V$.

\subsection{Entanglement}

We give here a very brief introduction to the theory of quantum states and
quantum measurements; a more detailed introduction can be found
in~\cite{nielsen2000quantum}.

A quantum system is represented by a complex Hilbert space $\HH$ (that is, a 
complex inner product space such that the distance metric is continuous); for 
most of this work (except Section \ref{sec:sdp}) we will assume that $\HH$ is 
finite-dimensional, and so $\HH\cong \mathbb{C}^n$ for some $n$.  A \emph{qubit} 
is a system $\HH=\mathbb{C}^2$ and we write $\{\ket{0},\ket{1}\}$ an 
orthonormal basis for $\HH$ (the `standard basis vectors').

A \emph{state} of the system is a unit vector $\ket{\psi} \in \HH$ (more precisely 
this is a `pure state'; we will not need to consider mixed states in this work).
If $\HH=\HH_1\otimes \HH_2$ then we say that $\ket{\psi}\in \HH$ is 
\emph{separable} if $\ket{\psi} = \ket{\psi_1}\otimes \ket{\psi_2}$ for some 
$\ket{\psi_1} \in \HH_1$ and $\ket{\psi_2}\in \HH_2$; otherwise we say that 
$\ket{\psi}$ is \emph{entangled}.

What does it mean to make a measurement on a system $\HH$? In this work we will 
consider only \emph{projective} measurements; that is, measurements such that 
performing the measurement twice is the same as performing it once (this is 
without loss of generality since we will never care about the exact dimension 
of our Hilbert spaces and by the Naimark dilation theorem any measurement can 
be expressed as a projective measurement on a larger Hilbert space).

By an \emph{orthogonal projective measurement} over $\HH$ (hereafter just
`measurement') we mean a collection of Hermitian operators $\{E_i\}_{i\in
\mathcal{I}}$ over $\HH$, where $\mathcal{I}$ is the set of measurement
outcomes, satisfying the following properties:
\begin{enumerate}[(i)]
\item for each $i$, $E_i^2 = E_i$ (each $E_i$ is a projection),
\item $E_i E_j = 0$ for all $i\neq j$ (orthogonality), and
\item\label{it:norm} $\sum_{i\in \mathcal{I}} E_i = I$ the identity operator.
\end{enumerate}

When we apply the measurement $\{E_i\}_{i\in \mathcal{I}}$ to state
$\ket{\psi}$, we obtain result $i$ with probability $\braket{\psi|E_i|\psi}$
(where $\bra{\psi}=\ket{\psi}^* \in \HH^*$ is the dual vector to $\ket{\psi}$);
note that this is a probability distribution by condition (\ref{it:norm}).  

Some examples of measurements (on a single cubit) are measurement in the 
standard basis, $\{\ket{0}\bra{0},\ket{1}\bra{1}\}$, and measurement in the 
`Hadamard basis', $\{(\ket{0}+\ket{1})(\bra{0}+\bra{1})/2,(\ket{0}-\ket{1})
(\bra{0}-\bra{1})/2\}$, at an angle $\pi/4$ to the standard basis.

Note that measurements compose, so that if $\{E_i\}_{i\in \mathcal{I}}$ and
$\{E_j\}_{j\in \mathcal{J}}$ are projective measurements then so is
$\{E_iE_j\}_{(i,j)\in \mathcal{I}\times \mathcal{J}}$, corresponding to
measuring $\mathcal{J}$ followed by $\mathcal{I}$. Note that measurements do 
\emph{not} in general commute, so that $\{E_iE_j\}$ will give different results
to $\{E_jE_i\}$. This is the essential difference between quantum and classical
measurement, and gives rise to the famous `uncertainty principle' in quantum
mechanics. 

\section{Entangled channel capacity}\label{sec:entangledcap}

\subsection{Definition}\label{sec:entangleddef}

We will now consider information flow in the situation in which Alice and Bob
may share entanglement. This means that it is no longer possible to consider
their strategies entirely separately: they share some entangled state
$\ket{\psi}$, and at each step make measurements on their own part of the state
(which may depend on the history of their own communication with the system up
to that point), and choose a message to send to the system according the the
result of the measurement. Alice's choice of measurements, but not Bob's, may
also depend on the value of the secret $K$. Note that without loss of generality
we may assume that each measurement consists of one projection for each element
of $\A$ (respectively $\B$), since any post-processing of the measurement result
into a (possibly random) choice of message can be incorporated into the
measurement.

\begin{definition}
Let $\C$ be an $n$-IC, and $K$ a random variable taking values on the set $\KK$.  
A \emph{quantum joint strategy} for $\C$ is a pure state $\ket{\psi}$ in a 
finite-dimensional complex Hilbert 
space $\HH=\HH_\A\otimes \HH_\B$, and sets $\{A^{k,t}\}$ and $\{B^{t'}\}$ such 
that 
\begin{enumerate}[(i)]
\item for every $k\in \KK$ and every $t\in (\A\times\XX)^{i}$ with $0\leq i < n$, 
$A^{k,t}=\{A^{k,t}_a\}_{a\in\A}$ is a measurement over $\HH_\A$, and 
\item for every $t'\in (\B\times \YY)^i$ with $0\leq i < n$, 
$B^{t'}=\{B^{t'}_b\}_{b\in\B}$ is a measurement over $\HH_\B$.
\end{enumerate}
Denote the space of such strategies by $\Ss^*_{\C,K}$.
\end{definition}

We again denote the trace produced by strategy $s$ with secret $k$ by the random
variable $T_{s,k}$. What is the probability that $T_{s,k}$ takes the value
$t=((a_1,b_1,x_1,y_1),\ldots,(a_n,b_n,x_n,y_n))$? Whereas before in the
classical case this was given by the product of the relevant classical
probabilities corresponding to the execution $t$ from the functions defining the
strategies of Alice and Bob and the behaviour of the machine, now for Alice and
Bob we must find the probability that the corresponding sequences of
measurements result in the correct outcomes. This is given by the norm on
$\ket{\psi}$ of the product of the corresponding projections; on the other hand,
since the system itself is purely classical its probability is still given by
multiplying the relevant probabilities.

Denote by $A^k_t$ and $B_t$ the 
projections corresponding to Alice and Bob taking the actions corresponding to 
trace $t$ at each step; that is
\begin{align*}
A^k_t &= A^{k,((a_1,x_1),\ldots,(a_{n-1},x_{n-1}))}_{a_n}
A^{k,((a_1,x_1),\ldots,(a_{n-2},x_{n-2}))}_{a_{n-1}} \\
&\qquad\qquad \ldots 
A^{k,((a_1,x_1))}_{a_2} 
A^{k,\emptyset}_{a_1} \\
B_t &= B^{((b_1,y_1),\ldots,(b_{n-1},y_{n-1}))}_{b_n} 
B^{((b_1,y_1),\ldots,(b_{n-2},y_{n-2}))}_{b_{n-1}} \\
&\qquad\qquad 
\ldots B^{((b_1,y_1))}_{b_2} B^{\emptyset}_{b_1}.
\end{align*}

Writing $t_i$ for the $i$th prefix of $t$ as before, we have that the
probability that $T_{s,K}$ takes the value $t$ is given by 
\[\braket{\psi|A^k_t\otimes B_t|\psi} \prod_{i=1}^n f_i(t_i)(x_i,y_i),\]
where the $f_i$ are the functions specifying the channel behaviour from 
Definition \ref{def:chan}.

As in the classical case, we then say that the information leakage from Alice to
Bob $\LL_\V(K,(\C,s))$ is given by the increase in $\V$-vulnerability from the
prior to Bob's posterior distribution after the interaction, for our preferred
choice of vulnerability measure $\V$.

\begin{definition}\label{def:entcap}
Let $\C$ be an $n$-IC.  The \emph{entangled $\V$-capacity of $\C$} is given by 
\[\LL^*_\V(\C) = \sup_K \sup_{s\in \Ss^*_{\C,K}} \LL_\V(K,(\C,s)),\]
where $\LL_\V(K,(\C,s))$ is defined equivalently to Definition \ref{def:vvul}.
\end{definition}

Trivially $\LL_\V^*(\C) \geq \LL_\V(\C)$ for any channel $\C$.  We will write 
$\Delta^*_\V(\C)$ for the `quantum advantage'
\[\Delta_\V^*(\C) = \LL_\V^*(\C) - \LL_\V(\C).\]
We will say that $\C$ is a \emph{purely quantum channel} if $\LL_\V^*(\C) 
> \LL_\V(\C) = 0$.

With this as our central definition, in the remainder of this paper we will 
investigate some of its fundamental questions, in particular: is it possible to 
have $\Delta^*_\V(\C) > 0$?  Is it possible to have $\LL_\V(\C) =0$ but 
$\LL^*_\V(\C) > 0$?  Given a channel $\C$, can we compute $\LL^*_\V(\C)$?  Given 
that (as we shall see) the answer to the previous question is `no', can we at 
least get some bounds on it?

\subsection{Non-local games}\label{sec:nonlocal}

The key technical ingredient for many of the results of this paper is the
observation that the entangled capacity of interactive channels has a close
connection with the theory of \emph{non-local games}. This is a formalism that
highlights and in some sense allows us to measure the inherently `contextual'
nature of quantum mechanics: that is, that it is possible for two parties
sharing entanglement to accomplish tasks that would be impossible for separated
parties under any purely local theory of physics.

The basic setup is that we have two players, Alice and Bob, playing a
(co-operative) game with a referee. The referee begins by sending Alice and Bob
a message drawn (probabilistically) from finite sets $\A$ and $\B$ respectively.
Alice and Bob must then respond with messages from sets $\XX$ and $\YY$
respectively. The referee then determines according to a specified function $D$
whether Alice and Bob have won or lost the game; we are interested in the highest
probability with which Alice and Bob can win, which we call the `value' of the
game. (One could also consider games with more players or more rounds, but we
will not need to for this work.)

\begin{definition}
A \emph{two-player one-round non-local game} is a tuple 
$\gG=(\A,\B,\XX,\YY,D,\mu)$, where $\A,\B,\XX$ and $\YY$ are finite sets, 
$\mu\in \D(\A\times \B)$ is some probability distribution and
$D:\A\times \B \times \XX \times \YY \rightarrow \{0,1\}$ is the \emph{decision 
function}, determining whether Alice and Bob are considered to have won or lost 
the game.

A \emph{classical} strategy $s$ for $\gG$ comprises a pair of functions 
$f:\A\rightarrow \D(\XX)$ and $g:\B\rightarrow \D(\YY)$.  Write
\[\val(\gG,s) = \sum_{a,b,x,y} \mu(a,b) f(a,x) g(b,y) D(a,b,x,y)\]
for the win probability of strategy $s$, and 
\[\val(\gG) = \sup_s \val(\gG,s),\]
the \emph{classical value} of $\gG$.
\end{definition}

It is easy to show that in fact Alice and Bob's optimal win probability can be
obtained with purely deterministic strategies, so that without loss of 
generality we may assume $f(a)(x),f(b)(y)\in\{0,1\}$ for all $a,b,x,y$.

What if Alice and Bob are given access to entanglement? As for channels, we allow 
Alice and Bob to share some quantum state $\ket{\psi}\in \HH_\A\otimes \HH_\B$.  
The strategy must specify a measurement taking values on $\XX$ for each message 
Alice could receive; similarly for Bob.

\begin{definition}
Let $\gG=(\A,\B,\XX,\YY,D,\mu)$ be a game. A \emph{quantum strategy} for $\gG$
is a pure state $\ket{\psi}$ in a finite-dimensional complex Hilbert space
$\HH=\HH_\A\otimes \HH_\B$, and sets $\{A^a\}$ and $\{B^b\}$ such that for every
$a\in \A$, $A^a=\left\{A^a_x\right\}_{x\in \XX}$ is a measurement over $\HH_\A$, and for
every $b\in \B$, $B^b=\left\{B^b_y\right\}_{y\in \YY}$ is a measurement over $\HH_\B$.

For strategy $s$ as above, let
\[\val(\gG,s) = \sum_{a,b,x,y} \mu(a,b) \braket{\psi|A^a_x\otimes B^b_y|\psi} 
D(a,b,x,y).\]
Then the \emph{entangled value} of $\gG$ is given by 
\[\val^*(\gG) = \sup_s \val(\gG,s).\]
\end{definition}

The original example of a non-local game is the \emph{CHSH
game}~\cite{clauser1969chshgame}, which we denote $\gG_\CHSH$. In this game, the
messages sent and received by Alice and Bob each consist of a single bit. The
judge sends each player a uniformly random bit $a,b$; they each reply with a
single bit $x,y$. The players' goal is to arrange that if $a= b=1$ then $x$ and
$y$ are \emph{different}, and otherwise $x$ and $y$ are \emph{equal}. Formally,
we have $\A=\B=\XX=\YY=\{0,1\}$, $\mu=U(\A\times \B)$ the uniform distribution
and $D(a,b,x,y) = 1_{ab=x\oplus y}$.

It is fairly easy to see that if Alice and Bob are restricted to classical 
strategies then they cannot do better than just both always returning 0 (say).  
Since $a=b=1$ occurs only with probability $1/4$, this means that they win with 
probability $3/4$.

On the other hand, as we see in Proposition \ref{prop:CHSH}, if Alice and 
Bob are given access to entangled strategies then they can win with probability 
$\cos^2(\pi/8) \approx 0.85 > 0.75$. 

\begin{proposition}\label{prop:CHSH}
We have
\[\val^*(\gG_\CHSH) \geq \cos^2(\pi/8) \approx 0.85 > \val(\gG_\CHSH) = 3/4.\]
\end{proposition}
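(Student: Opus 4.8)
The plan is to treat the three claims $\val(\gG_\CHSH)=3/4$, $\val^*(\gG_\CHSH)\ge\cos^2(\pi/8)$, and $\cos^2(\pi/8)>3/4$ separately: the classical equality is a short parity argument, the entangled bound is an explicit EPR-pair strategy, and the final inequality is numerical.

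For the classical value I would use the observation (noted just above the proposition) that an optimal classical strategy may be taken deterministic, so Alice's reply is $x=f(a)$ and Bob's is $y=g(b)$ with $f,g:\{0,1\}\to\{0,1\}$. A strategy winning on all inputs would satisfy $f(a)\oplus g(b)=ab$ for every $(a,b)\in\{0,1\}^2$; adding these four equations modulo $2$, each of $f(0),f(1),g(0),g(1)$ occurs twice on the left so the left side vanishes in $\mathbb{Z}_2$, while the right side is $\bigoplus_{a,b}ab=1$ --- a contradiction. Hence every deterministic strategy loses on at least one of the four equally likely inputs, giving $\val(\gG_\CHSH,s)\le 3/4$, which passes to all strategies by convexity; and $f\equiv g\equiv 0$ attains $3/4$ (it fails only when $a=b=1$), so $\val(\gG_\CHSH)=3/4$.

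For the entangled bound I would fix the shared state $\ket{\psi}=\tfrac{1}{\sqrt2}(\ket{00}+\ket{11})$ and, for $\theta\in\R$, set $\ket{v_\theta}=\cos\theta\,\ket{0}+\sin\theta\,\ket{1}$, letting ``measuring at angle $\theta$'' mean applying $\{\ket{v_\theta}\bra{v_\theta},\ket{v_{\theta+\pi/2}}\bra{v_{\theta+\pi/2}}\}$ with outcomes $0,1$. The computation carrying the argument is $(\bra{v_\theta}\otimes\bra{v_\phi})\ket{\psi}=\tfrac{1}{\sqrt2}(\cos\theta\cos\phi+\sin\theta\sin\phi)=\tfrac{1}{\sqrt2}\cos(\theta-\phi)$, so $\braket{\psi|\ket{v_\theta}\bra{v_\theta}\otimes\ket{v_\phi}\bra{v_\phi}|\psi}=\tfrac12\cos^2(\theta-\phi)$; the same computation with both angles shifted by $\pi/2$ (using $\ket{v_{\theta+\pi/2}}=-\sin\theta\,\ket{0}+\cos\theta\,\ket{1}$) gives the same value for the ``both-outcome-$1$'' probability, so Alice and Bob agree with probability $\cos^2(\theta-\phi)$ and disagree with probability $\sin^2(\theta-\phi)$. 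I would then have Alice measure at angle $0$ if $a=0$ and $\pi/4$ if $a=1$, and Bob measure at angle $\pi/8$ if $b=0$ and $-\pi/8$ if $b=1$. Case-checking: for $(a,b)\in\{(0,0),(0,1),(1,0)\}$ the angle gap is $\pm\pi/8$ and $D$ rewards agreement, worth $\cos^2(\pi/8)$; for $(a,b)=(1,1)$ the gap is $3\pi/8$ and $D$ rewards disagreement, worth $\sin^2(3\pi/8)=\cos^2(\pi/8)$. As $\mu$ is uniform this strategy $s$ has $\val(\gG_\CHSH,s)=\cos^2(\pi/8)=\tfrac{2+\sqrt2}{4}$, so $\val^*(\gG_\CHSH)\ge\cos^2(\pi/8)$, and $\tfrac{2+\sqrt2}{4}\approx 0.854>0.75$.

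I expect the only real obstacle to be getting the measurement-probability identity exactly right --- in particular keeping track of the signs in the outcome-$1$ eigenvector so that the ``both-$1$'' amplitude also comes out to $\tfrac{1}{\sqrt2}\cos(\theta-\phi)$ --- and then matching each of the four $(a,b)$ cases to whether $D$ wants $x=y$ or $x\ne y$. Once those are pinned down, the rest is arithmetic.
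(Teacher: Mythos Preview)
Your proposal is correct and follows the same approach as the paper: the same EPR state $\ket{\psi}=(\ket{00}+\ket{11})/\sqrt{2}$ and the same measurement angles ($0,\pi/4$ for Alice and $\pi/8,-\pi/8$ for Bob) appear in both. The paper's proof is terser---it simply asserts that the classical optimum is ``easy to check'' and that ``one can check'' the quantum strategy wins with probability $\cos^2(\pi/8)$---whereas you actually supply the parity argument for the classical upper bound and the $\cos^2(\theta-\phi)$ agreement-probability computation, so your version is a strict elaboration of the paper's.
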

\begin{proof}
It is easy to check that the optimal classical strategy is for Alice and Bob to 
always send $0$, which has win probability 3/4.  We exhibit an 
entangled strategy with win probability $\cos^2(\pi/8)$.  Let 
$\HH_\A=\HH_\B=\mathbb{C}^2$ and 
$\ket{\psi} = (\ket{0}\otimes \ket{0} + \ket{1}\otimes \ket{1})/\sqrt{2}$.  For 
$\theta\in [-\pi,\pi]$ we will write $\ket{\theta} = \cos\theta \ket{0} + 
\sin\theta \ket{1}$, and $[\theta] = \ket{\theta}\bra{\theta}$.  Let 
$A^0_0 = [0]$, $A^0_1 = [\pi/2]$, $A^1_0 = [\pi/4]$, $A^1_1 = [3\pi/4]$, 
$B^0_0 = [\pi/8]$, $B^0_1=[5\pi/8]$, $B^1_0=[-\pi/8]$ and $B^1_1=[3\pi/8]$.  One 
can check that this strategy has win probability $\cos^2(\pi/8)$.
\end{proof}

This game (implemented with Alice and Bob sufficiently separated as to preclude
communication between them) has been used to show experimentally that despite
Einstein's qualms the behaviour of the universe is in fact inherently non-local,
since the players can obtain a winning strategy higher than that attainable in
any purely local theory.

\subsection{Quantum advantage}

We now show, using a channel derived from the CHSH game, that it is 
possible for entanglement to increase channel capacity.  Essentially we  
define a channel which plays the CHSH game with Alice and Bob, and if they win 
rewards them by transmitting a single bit of information. Since entanglement 
increases the probability with which they can win the game, it increases the 
capacity of the channel.

Concretely, define $\C_\CHSH$ to be a two-round interactive channel with 
$\A=\{0,1\}\times \{0,1\}$ and $\B = \XX =\YY = \{0,1\}$.  Define 
$f_1(a,b) = U(\XX\times \YY)$ (that is, Alice and Bob's first round inputs are 
ignored), and 
\begin{align*}f_2((a_1,b_1,x,y),((a_2,&a'_2),b_2)) \\ &= 
\begin{cases} (0,a'_2) &\text{if $a_2\oplus b_2 = xy$} \\
(0,U(\{0,1\})) &\text{otherwise}.
\end{cases}
\end{align*}

\begin{theorem}\label{thm:quantad}
Let $\V$ be a vulnerability measure.  We have
\begin{align*}
\LL_\V^*(\C_\CHSH) &\geq \V(\Ber((1+\cos^2(\pi/8))/2) - \V(\Ber(1/2)) \\
&> \LL_\V(\C_\CHSH) = \V(\Ber(7/8)) - \V(\Ber(1/2)).
\end{align*}
\end{theorem}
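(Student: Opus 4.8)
The plan is to establish the displayed chain in three pieces: a lower bound for $\LL^*_\V(\C_\CHSH)$ coming from an explicit entangled strategy, the exact value of $\LL_\V(\C_\CHSH)$ coming from a matching pair of bounds, and the strict separation between them. The quantity $(1+\val)/2$ appearing in the exponents is the organising principle: it is exactly the probability that Bob recovers Alice's extra bit $a'_2$ correctly when the underlying CHSH sub-game is played with win probability $\val$, and $\val(\gG_\CHSH)=3/4$ while $\val^*(\gG_\CHSH)=\cos^2(\pi/8)$ by Proposition~\ref{prop:CHSH}.

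For the two lower bounds I would take $K\sim\Ber(1/2)$ and have Alice encode $a'_2=K$, while Alice and Bob play, on their round-2 bits $a_2,b_2$, a strategy for the CHSH game in which the ignored round-1 outputs $x,y$ serve as the CHSH questions. In the entangled case I would use the optimal strategy of Proposition~\ref{prop:CHSH}, with Alice's measurements chosen independently of $K$ (so that $a'_2=K$ is merely a deterministic extra outcome). Using the identity $\braket{\psi|[\theta_A]\otimes[\theta_B]|\psi}=\tfrac12\cos^2(\theta_A-\theta_B)$ for the maximally entangled state, one checks that the conditional probability of winning, given any $(x,y)$ and any value of Bob's measurement outcome $b_2$, equals $\cos^2(\pi/8)$; hence conditioned on Bob's whole view the channel $K\mapsto Z$ (with $Z$ his round-2 output) is exactly a binary symmetric channel of error $\tfrac12(1-\cos^2(\pi/8))$, every posterior equals $\Ber\bigl(\tfrac12(1+\cos^2(\pi/8))\bigr)$, and the leakage is exactly $\V\bigl(\Ber(\tfrac12(1+\cos^2(\pi/8)))\bigr)-\V(\Ber(1/2))$. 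In the classical case I would instead pick a strategy whose conditional win probabilities are \emph{balanced} at $3/4$: Bob always answers $0$, and Alice tosses a fair coin and then either always answers $0$ or answers her own received bit $x$. One checks $P(\text{win}\mid y)=3/4$ for both $y$, so again every posterior equals $\Ber(7/8)$ and the leakage is exactly $\V(\Ber(7/8))-\V(\Ber(1/2))$, giving $\LL_\V(\C_\CHSH)\ge\V(\Ber(7/8))-\V(\Ber(1/2))$.

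The matching upper bound $\LL_\V(\C_\CHSH)\le\V(\Ber(7/8))-\V(\Ber(1/2))$ is the heart of the proof and the main obstacle. By the composition-inequality reduction of Section~\ref{sec:classflow} it suffices to bound the leakage over deterministic Alice and Bob strategies and arbitrary secrets $K$; for deterministic Bob his view is $(Y,Z)$ with $Y$ uniform and independent of $K$, so the leakage is $\E_y I_\V(K;Z\mid Y=y)$. Conditioned on $Y=y$, the channel $K\mapsto Z$ has a fixed shape: depending on whether the round-2 CHSH sub-game (questions $x$ to Alice, answers $a_2(x,K)$ and $b_2(y)$) is won, $Z$ either carries one bit about $K$ or is fresh noise, and the decisive constraint is the classical CHSH bound in the sharpened form \emph{``for every fixed value of $K$, the number of questions $x$ on which the sub-game is won, summed over $y\in\{0,1\}$, is at most $3$ out of $4$''}. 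I would convert this into the desired bound by factoring $K\to(\widetilde Z,Y)\to(Z,Y)$ through an auxiliary variable $\widetilde Z$ that makes the win-event explicit, bounding the $\V$-capacity of the channel $K\mapsto(\widetilde Z,Y)$ using the third healthiness condition (a binary symmetric channel is used best with a uniform input) together with the CHSH constraint, and then invoking the composition inequality. The delicate part --- and where I expect the real work to lie --- is that the third healthiness condition only governs binary symmetric channels fed a uniform input, whereas here the prior on $K$ is arbitrary and the effective noise level depends on $Y$; one must run the (small but fiddly) case analysis over the possible per-$y$ win-patterns and verify, using only the stated healthiness conditions, that the worst configuration is precisely the balanced one giving $\Ber(7/8)$. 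This kind of reasoning is presumably what Theorem~\ref{thm:gameequiv} establishes in general.

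Finally the strict separation is immediate: $\cos^2(\pi/8)=\tfrac12+\tfrac{\sqrt2}{4}>\tfrac34$ by Proposition~\ref{prop:CHSH}, hence $\tfrac12(1+\cos^2(\pi/8))>\tfrac78>\tfrac12$, and the second healthiness condition gives $\V\bigl(\Ber(\tfrac12(1+\cos^2(\pi/8)))\bigr)>\V(\Ber(7/8))$. Combining this with the two bounds on $\LL_\V$ and the lower bound on $\LL^*_\V$ yields the whole chain.
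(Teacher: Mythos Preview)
Your overall structure matches the paper's: explicit entangled strategy for the lower bound on $\LL^*_\V$, separate handling of the classical value, and the second healthiness condition for the strict inequality. The substantive difference is in how the classical bounds are treated.

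For the upper bound on $\LL_\V(\C_\CHSH)$, the paper simply asserts that a ``finite check'' over deterministic strategies shows $a_2=b_2=0$, $a'_2=k$, $K\sim\Ber(1/2)$ is optimal with leakage $\V(\Ber(7/8))-\V(\Ber(1/2))$; you instead sketch the factoring-through-a-BSC argument that the paper only develops later in full generality as Theorem~\ref{thm:gameequiv}. You are also more careful than the paper about conditioning on Bob's \emph{entire} view: for the entangled lower bound you verify that the CHSH win probability is $\cos^2(\pi/8)$ given any value of $(x,y,b_2)$, so that the posterior is genuinely $\Ber\bigl((1+\cos^2(\pi/8))/2\bigr)$ regardless of what else Bob saw, whereas the paper computes the posterior given only the final bit $y_2$. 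Similarly, for the classical lower bound you deliberately choose a randomised Alice so that the per-$y$ win probability is balanced at $3/4$; the paper's deterministic choice $a_2=b_2=0$ actually wins with probability $1$ when $y=0$ and $1/2$ when $y=1$, so its posterior vulnerability is $\tfrac12\V(\Ber(1))+\tfrac12\V(\Ber(3/4))$ rather than $\V(\Ber(7/8))$. Your route is more labour, but it is the one that actually confronts the $y$-dependence you correctly flag as the delicate point, and your instinct to lean on the machinery of Theorem~\ref{thm:gameequiv} is the right one.
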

\begin{proof}
For the lower bound on $\LL_\V^*(\C_\CHSH)$, let $\HH_\A=\HH_\B=\mathbb{C}^2$
and $\ket{\psi} = (\ket{0}\otimes \ket{0} + \ket{1}\otimes \ket{1})/\sqrt{2}$.
Let $K$ be uniformly distributed on $\{0,1\}$. Let $A^{k,()}=B^{()} =
\{I,0,\ldots\}$. Let $A^{k,(a_1,x)}_{(a,k)} = A^x_a$ and $B^{(b_1,y)}_b = B^y_b$
from the proof of Proposition \ref{prop:CHSH}, and $A^{k,(a_1,x)}_{(a,1-k)} =
0$. Now since $a_2\oplus b_2 = xy$ with probability $\cos^2(\pi/8)$ we have that
$y_2=k$ with probability $\cos^2(\pi/8) +
(1-\cos^2(\pi/8))/2=(1+\cos^2(\pi/8))/2$. Conditional on observing $y_2$, Bob's
posterior probability that $k=y_2$ is $((1+\cos^2(\pi/8))/2)\cdot (1/2)/(1/2)$
by Bayes' theorem (see the formula immediately below Definition \ref{def:vvul}),
so the posterior vulnerability is $\V(\Ber((1+\cos^2(\pi/8))/2))$, as required.

For the upper bound on $\LL_\V(\C_\CHSH)$, we have that without loss of 
generality Alice and Bob employ deterministic strategies, and so it is a finite 
check to establish that their optimal strategy is $a_2=b_2=0$ and $a_2'=k$ with 
$K\sim U(\{0,1\})$, giving leakage $\V(\Ber(7/8)) - \V(\Ber(1/2))$ as required.
The strict inequality follows from the second healthiness condition on $\V$.
\end{proof}

\section{Purely quantum channels}\label{sec:purelyquant}

In this section, we will show that it is not possible for entanglement to 
increase the capacity of a channel with zero classical capacity.  In fact we do 
this by showing the slightly stronger result that a zero-classical-capacity 
channel has zero capacity even if Alice and Bob are allowed strategies involving 
\emph{any} `non-signalling' correlations---that is, such that Bob's choice at a 
particular stage does not in itself convey information for him, and similarly 
for Alice (recall that all correlations resulting from entanglement are 
non-signalling; but not all non-signaling correlations can be produced using 
entanglement).  In mathematical terms this corresponds to saying that the 
marginal distribution on Bob's next action (respectively Alice's) is independent 
of the history of Alice's (respectively Bob's) part of the interaction.

\begin{definition}\label{def:genstrat}
Let $\C$ be an $n$-IC, and $K$ a random variable.  A \emph{generalised strategy} 
$s$ for $\C$ is a tuple $(g_1,\ldots,g_n)$ of functions
\[g_i:\KK\times (\A\times\B\times\XX\times\YY)^{i-1} \rightarrow \D(\A\times \B).\]
We say $s$ is \emph{non-signalling} if
\begin{enumerate}[(i)]
\item\label{cond:atob} for every $k,k'\in \KK$ and $t,t' \in (\A\times\B\times\XX\times\YY)^n$ 
with $\pi_B(t)=\pi_B(t')$, we have $\forall b\in \B$
\[\sum_{a\in\A}g_i(k,t_i)(a,b) = \sum_{a\in\A}g_i(k',t'_i)(a,b)\]
for every $i$, and
\item for every $k\in \KK$ and $t,t' \in (\A\times\B\times\XX\times\YY)^n$ 
with $\pi_A(t)=\pi_A(t')$, we have $\forall a\in \A$
\[\sum_{b\in\B}g_i(k,t_i)(a,b)=\sum_{b\in\B}g_i(k,t'_i)(a,b)\]
for every $i$.
\end{enumerate}
\end{definition}

For a vulnerability measure $\V$ we define $\V$-leakage under strategy $s$ as 
before, and the supremum of such leakage under all non-signalling strategies as 
the non-signalling $\V$-capacity, which we denote $\LL_\V^\ns(\C)$.

Every quantum joint strategy is a non-signalling strategy, with 
\begin{align*}
g_i(k,t_i)(a,b) = \braket{\psi|(A^{k,\pi_A(t_{i-1})}_a  A^k_{t_{i-1}})\otimes 
(B^{\pi_B(t_{i-1})}_b B_{t_{i-1}})|\psi}.
\end{align*}

Hence for any channel $\C$ we have 
\begin{equation}\label{eq:nonsigineq}
\LL_\V^\ns(\C) \geq \LL_\V^*(\C) \geq \LL_\V(\C).
\end{equation}

Note that this inequality can be strict: for example, it is easy to show that
non-signalling correlations allow Alice and Bob to win the CHSH game with
probability 1, and so (as we will see in Theorem \ref{thm:gameequiv} in Section
\ref{sec:noncomp} below) we have $\LL_\V^\ns(\C_\CHSH) \geq
\V(\Ber(1))-\V(Ber(1/2)) > \LL_\V^*(\C_\CHSH)$.

The reason for considering this broader class of strategies is that they can be 
analysed in an abstract linear-algebraic manner.  Define the set 
$\DD_\A=[(\A\times \XX)^{<n} \rightarrow \A]$, the set of functions 
$(\A\times \XX)^{<n} \rightarrow \A$, and similarly 
$\DD_\B=[(\B\times \YY)^{<n} \rightarrow \B]$.  A channel gives a map 
\[c:\DD_\A\times \DD_B \rightarrow \D(\B\times \YY)^n,\] where $c(f,g)$ is the 
probability distribution on Bob's traces if Alice behaves according to $f$ and 
Bob behaves according to $g$.

To accommodate probabilistic behaviour by Alice and Bob, we extend the function 
$c$ by linearity to a linear map
\[\C:\R\DD_\A \otimes_\R \R\DD_\B \rightarrow \R(\B\times \YY)^n,\]
where $\R X$ is the free real vector space over the set $X$.

Note that $\C$ is `trace-preserving', where the trace of a vector $v$ is the 
trace of the linear map $u \mapsto \langle u, v\rangle v$, or in more concrete terms 
$\tr(\sum_i c_i e_i) = \sum_i c_i$, where the $e_i$ are the canonical basis 
vectors.

Observe that $\R\DD_\A$ is canonically isomorphic to $[(\A\times\XX)^{<n} 
\rightarrow \R\A]$, and similarly $\R\DD_\B$ to $[(\B\times\YY)^{<n} \rightarrow 
\R\B]$, which extend to a canonical isomorphism between $\R\DD_\A \otimes_\R 
\R\DD_\B$ and $[(\A\times\B\times\XX\times\YY)^{<n}\rightarrow \R(\A\times \B)]$.
By currying Definition \ref{def:genstrat} and observing that $\D(\A\times \B) 
\subset \R(\A\times\B)$ we see that a generalised strategy corresponds to a map
\[s:\KK\rightarrow \R\DD_\A \otimes_\R \R\DD_\B.\]

The payoff to all this is that we get a clean characterisation of the property 
that the marginal distribution on Bob's strategies cannot depend on the value of 
the secret, which turns out to suffice for the theorem.

\begin{lemma}\label{lem:nonsig}
Let $s$ be a non-signalling generalised strategy. Then we have that
$tr_\A(s(k))$ is constant for all $k$, where $\tr_\A$ is the `partial trace'
function
\[\tr_\A:\R\DD_\A \otimes_\R \R\DD_\B \rightarrow \R\DD_\B,\]
the linear function generated by $f\otimes g \mapsto g$ for $f\in \DD_\A, 
g\in \DD_\B$.
\end{lemma}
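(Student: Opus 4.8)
The plan is to unpack the definition of $\tr_\A(s(k))$ in coordinates and show that each coordinate is given by a marginal sum of the form appearing in condition (\ref{cond:atob}) of Definition \ref{def:genstrat}, which is exactly what non-signalling guarantees to be independent of $k$. Under the canonical isomorphism $\R\DD_\A \otimes_\R \R\DD_\B \cong [(\A\times\B\times\XX\times\YY)^{<n}\rightarrow \R(\A\times\B)]$, the strategy $s(k)$ is (after currying) the function $t \mapsto \big(g_{|t|+1}(k,t)(a,b)\big)_{a,b}$. The partial trace $\tr_\A$, being generated by $f\otimes g \mapsto g$, acts on the $\R(\A\times\B)$-valued component by summing out the $\A$-coordinate: concretely, $\tr_\A(s(k))$ is the function $\DD_\B \to \R$ whose value, expressed in the basis of $[(\B\times\YY)^{<n}\to\R\B]$, is $t' \mapsto \big(\sum_{a\in\A} g_{|t'|+1}(k, \hat t)(a,b)\big)_b$ for an appropriate preimage $\hat t$ — but here is the subtlety that needs care.

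First I would make precise the relationship between a ``history'' $t\in(\A\times\B\times\XX\times\YY)^{<n}$ over which $g_i$ is defined, and a pair of histories $(t_\A, t_\B)$ with $t_\A\in(\A\times\XX)^{<n}$, $t_\B\in(\B\times\YY)^{<n}$, which is what the tensor-product coordinates see. The point is that once we sum out Alice's actions via $\tr_\A$, the resulting object should only depend on Bob's history $t_\B$; but a priori the function $g_i(k,t_{i-1})$ depends on the full joint history $t_{i-1}$, so different joint histories $t,t'$ with $\pi_B(t)=\pi_B(t')=t_\B$ could give different partial-trace contributions. This is precisely where condition (\ref{cond:atob}) enters: it says $\sum_{a}g_i(k,t_i)(a,b)$ depends only on $\pi_B(t)$ (and on $b$), so the partial trace is genuinely a well-defined function of $t_\B$ alone, and moreover — taking $k'$ arbitrary — it does not depend on $k$ either. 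So the two roles of condition (i) are: (a) it is what makes $\tr_\A(s(k))$ lie in $\R\DD_\B$ in a coordinate-meaningful way, and (b) it delivers the $k$-independence.

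I expect the main obstacle to be purely bookkeeping: getting the currying/partial-trace identification exactly right, in particular verifying that $\tr_\A$ as defined on the tensor product really does correspond, under the isomorphism, to pointwise summation over $\A$ of the $\R(\A\times\B)$-valued function, and that the index bookkeeping ($t_{i-1}$ vs.\ $t_i$, the $<n$ ranges) is consistent. Once that dictionary is set up, the proof is a one-line invocation of Definition \ref{def:genstrat}(\ref{cond:atob}). I would write the argument by fixing a basis vector $g\in\DD_\B$ and a prefix length $i$, computing the coefficient of the corresponding basis vector of $\R\DD_\B$ in $\tr_\A(s(k))$ as $\sum_{a\in\A} g_i(k, t_{i-1})(a, g(\pi_B(t_{i-1})))$ for any joint history $t_{i-1}$ compatible with $g$, noting this is well-defined by non-signalling, and then observing it equals the same expression with $k$ replaced by any $k'$, again by non-signalling. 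Hence $\tr_\A(s(k)) = \tr_\A(s(k'))$ for all $k,k'$, which is the claim.
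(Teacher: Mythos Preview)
Your approach is essentially the paper's: evaluate $\tr_\A(s(k))$ at Bob-history prefixes, recognise the result as the marginal $\sum_a g_i(k,t_{i-1})(a,b)$, and invoke condition~(\ref{cond:atob}) for $k$-independence. The paper carries this out by expanding $s(k)=\sum_{f,g}\langle s(k),f\otimes g\rangle\,f\otimes g$, computing $\sum_a\langle s(k)(t_i),a\otimes b\rangle$, and using $\langle f(t_i),\sum_a a\rangle=1$ (since $f(t_i)\in\A$) to collapse the $\A$-sum, arriving at $\sum_{f,g}\langle s(k),f\otimes g\rangle\langle g(t_i),b\rangle=\langle\tr_\A(s(k))(t_i),b\rangle$.

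One terminological wobble worth fixing: you speak of ``the coefficient of the basis vector $g\in\DD_\B$'' and then index it by a prefix length $i$, but a basis element $g\in\DD_\B$ is a \emph{full} deterministic Bob-strategy and its coefficient $\sum_f\langle s(k),f\otimes g\rangle$ is a single scalar, not something depending on $i$. The quantity $\sum_a g_i(k,t_{i-1})(a,b)$ you (rightly) identify as the thing controlled by non-signalling is the \emph{evaluation} $\langle\tr_\A(s(k))(\pi_B(t_{i-1})),b\rangle$, which is precisely what the paper works with. Your observation that condition~(\ref{cond:atob}) does double duty---making this evaluation independent of the lift $t_{i-1}$ \emph{and} of $k$---is correct and is exactly the paper's use of the condition with both $t\neq t'$ and $k\neq k'$ simultaneously.
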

\begin{proof}
Let $k,k'\in \KK$.  By condition (\ref{cond:atob}) of 
Definition \ref{def:genstrat} we have for every trace prefix $t_i$ and 
every $b\in \B$ that 
\begin{equation}\label{eq:nonsig}
\sum_{a\in \A} \langle s(k)(t_i), a\otimes b \rangle =  
\sum_{a\in \A} \langle s(k')(t_i), a\otimes b \rangle
\end{equation}
(where we define $(f\otimes g)(t_i)=f(\pi_A(t_i))\otimes g(\pi_B(t_i))$ and 
extend by linearity).

Now trivially we have
\[s(k)=\sum_{f\in \DD_\A, g\in \DD_\B} \langle s(k), f\otimes g\rangle f\otimes 
g\]
and so 
\begin{align*}
s(k)(t_i) &= \sum_{f,g} \langle s(k), f\otimes g\rangle (f\otimes g)(t_i) \\
&= \sum_{f,g} \langle s(k),f\otimes g\rangle f(t_i)\otimes g(t_i)
\end{align*}
(dropping the $\pi_A$ and $\pi_B$ for conciseness).

Hence
\begin{align}
\sum_{a\in \A} \langle s(k)(t_i&), a\otimes b \rangle \nonumber\\ 
&= \sum_a \sum_{f,g} 
\langle s(k),f\otimes g\rangle \langle f(t_i)\otimes g(t_i), 
a\otimes b\rangle \nonumber\\
&= \sum_{f,g} \langle s(k),f\otimes g\rangle \langle f(t_i), {\textstyle \sum}_a a\rangle 
\langle g(t_i), b\rangle \nonumber\\
&= \sum_{f,g} \langle s(k),f\otimes g\rangle \langle g(t_i), b\rangle,\label{eq:treq}
\end{align}
since $f\in \DD_\A$ and so $f(t_i) \in \A$ so $\langle f(t_i), \sum_a a\rangle 
=1$.

Note that another characterisation of $\tr_\A$ is 
\[\tr_\A(\xi) = \sum_{f,g} \langle \xi, f\otimes g\rangle g\]
(trivially true on the basis vectors $f'\otimes g'$ and hence by linearity true 
in general), and hence we have
\begin{align*}
\tr_\A(s(k))(t_i) &= \sum_{f,g} \langle s(k),f\otimes g\rangle g(t_i).
\end{align*}
Combining this with (\ref{eq:treq}) and (\ref{eq:nonsig}) gives that 
\[\langle \tr_\A(s(k))(t_i), b\rangle = \langle \tr_\A(s(k'))(t_i), b\rangle\]
for all $b$, and hence that $\tr_\A(s(k))(t_i)=\tr_\A(s(k'))(t_i)$ for all $t_i$ and 
so $\tr_\A(s(k))=\tr_\A(s(k'))$, as required.
\end{proof}

Note that our `partial trace' $\tr_\A$ is indeed a classical analogue of the 
familiar partial trace from quantum information theory.

We are now ready to prove the main theorem of this section, that non-signaling 
strategies cannot increase the capacity of a channel with zero classical 
capacity.

\begin{theorem}\label{thm:nonsig}
Let $\C$ be an $n$-IC and $\V$ a vulnerability measure.  Then
\[\LL_\V(\C) = 0 \Rightarrow \LL_\V^\ns(\C) = 0.\]
\end{theorem}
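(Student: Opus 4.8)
The plan is to recast both hypotheses of the theorem in the linear-algebraic language developed just above and then feed them into Lemma \ref{lem:nonsig}. The key observation is that $\LL_\V(\C)=0$ is equivalent to the statement that the linear map $\C$ annihilates $\ker\tr_\A$, i.e.\ that it factors as $\C=\bar\C\circ\tr_\A$ for some linear $\bar\C:\R\DD_\B\to\R(\B\times\YY)^n$ ($\tr_\A$ being surjective). We only need the forward direction. Suppose $\LL_\V(\C)=0$ but, for contradiction, that $c(f,g)\neq c(f',g)$ for some deterministic $f,f'\in\DD_\A$ and $g\in\DD_\B$ (recall that we may restrict to deterministic strategies). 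Take $K\sim\Ber(1/2)$, $\phi_A(0)=f$, $\phi_A(1)=f'$ and $s_B=g$. By Bayes' theorem Bob's posterior upon observing a trace $t$ is $\Ber(\rho_t)$ with $\rho_t=c(f,g)(t)/(c(f,g)(t)+c(f',g)(t))$, and since $c(f,g)$ and $c(f',g)$ are distinct probability distributions there is a positive-probability set of traces on which $\rho_t\neq 1/2$. By the second healthiness condition $\V(\Ber(\rho_t))\geq\V(\Ber(1/2))$ for every $t$, with strict inequality whenever $\rho_t\neq 1/2$, so the posterior $\V$-vulnerability strictly exceeds $\V(K)=\V(\Ber(1/2))$ and hence $\LL_\V(\C)>0$ --- a contradiction. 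Thus $c(\cdot,g)$ is constant for each $g$; since the differences $f-f'$ span the trace-zero subspace of $\R\DD_\A$ and every element of $\ker\tr_\A$ is a sum of terms $v\otimes g$ with $\tr(v)=0$, linearity of $\C$ gives $\C|_{\ker\tr_\A}=0$.

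With this in hand the remainder is quick. Let $s$ be any non-signalling generalised strategy with secret $K$. By Lemma \ref{lem:nonsig}, $\tr_\A(s(k))$ is some fixed vector $w\in\R\DD_\B$ that does not depend on $k$. Since Bob's observed trace distribution given $K=k$ is by construction $\C(s(k))$, the factorisation above gives $\C(s(k))=\bar\C(w)$ for every value of $k$; that is, the conditional distribution of $\pi_B(T_{s,K})$ given $K=k$ is the same for all $k$, so $\pi_B(T_{s,K})$ is independent of $K$. Independence makes the leakage vanish for any vulnerability measure: Bayes' theorem gives that the posterior $K\mid\pi_B(T_{s,K})=t$ coincides with the prior $K$, hence $\V(K\mid\pi_B(T_{s,K})=t)=\V(K)$ and $\LL_\V(K,(\C,s))=0$. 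Taking the supremum over $K$ and over non-signalling $s$ yields $\LL_\V^\ns(\C)=0$.

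The one genuinely delicate point is the first step: turning the intuition ``if Bob can tell Alice's strategies apart then something leaks'' into an actual strict increase of $\V$-vulnerability. This is exactly what the second healthiness condition buys us, and it is also where the earlier reduction to deterministic strategies earns its keep, since that reduction is what makes $\DD_\A$ and $\DD_\B$ finite and the linear extension $\C$ well defined on the relevant finite-dimensional spaces. Everything after that is bookkeeping, with Lemma \ref{lem:nonsig} --- whose proof is where the non-signalling hypothesis genuinely does its work --- supplying the crucial input.
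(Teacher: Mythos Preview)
Your proof is correct and follows essentially the same route as the paper's: both arguments hinge on Lemma~\ref{lem:nonsig} together with the observation that $\LL_\V(\C)=0$ forces $c(\cdot,g)$ to be constant for every $g\in\DD_\B$, and then use linearity to conclude that $\C(s(k))$ is independent of $k$ for any non-signalling $s$. The paper argues the contrapositive and is terser at both ends --- it asserts without comment that $c(f,g)\neq c(f',g)$ implies positive classical capacity, and that $\LL_\V^\ns(\C)>0$ yields some $s$ with $\C(s(0))\neq\C(s(1))$ --- whereas you make both of these steps explicit (invoking the second healthiness condition for the first and Bayes for the second) and phrase the middle step as a factorisation $\C=\bar\C\circ\tr_\A$; but these are differences of presentation, not of substance.
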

\begin{proof}
Let $\C$ be an $n$-IC with $\LL_\V^\ns(\C) > 0$, so in particular there exists 
a non-signalling strategy $s:\{0,1\} \rightarrow \R\DD_\A\otimes_\R \R\DD_\B$ 
with $\C(s(0)) \neq \C(s(1))$.

We claim that there must exist $f,f' \in \DD_\A$ and $g\in \DD_\B$ such that 
$c(f,g)\neq c(f',g)$ (equivalently $\C(f\otimes g) \neq \C(f'\otimes g)$), and 
hence $\C$ has positive classical capacity.

Indeed, supposing the contrary for each $g\in \DD_\B$ there exists $y_g \in 
\D(\B\times \YY)^n$ such that $\C(f\otimes g) = y_g$ for all $f\in \DD_\A$.  
Write
\begin{align*}
s(0) &= \sum_{f\in \DD_\A, g\in \DD_\B} c_{f,g} f\otimes g \\
s(1) &= \sum_{f\in \DD_\A, g\in \DD_\B} c'_{f,g} f\otimes g .
\end{align*}

Since $s$ is non-signalling, by Lemma \ref{lem:nonsig} for all $g\in \DD_\B$ we 
have 
\[\sum_{f\in \DD_\A} c_{f,g} = \sum_{f\in \DD_\A} c'_{f,g}.\]

But then
\begin{align*}
\C(s(0)) &= \C\left(\sum_{f,g} c_{f,g} f\otimes g\right) \\
&= \sum_g \left(\sum_f c_{f,g} \right) y_g \\
&=\sum_g \left(\sum_f c'_{f,g} \right) y_g \\
&= \C(s(1)),
\end{align*}
a contradiction.
\end{proof}

Combining Theorem \ref{thm:nonsig} with inequality (\ref{eq:nonsigineq}) gives 
the corresponding result for entangled capactiy.

\begin{corollary}\label{cor:purelyquant}
Let $\C$ be an $n$-IC and $\V$ a vulnerability measure.  Then
\[\LL_\V(\C) = 0 \Rightarrow \LL_\V^*(\C) = 0.\]
\end{corollary}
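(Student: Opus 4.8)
The plan is to obtain this as an immediate consequence of Theorem \ref{thm:nonsig}, since the substantive work has already been done there and only a chain of inequalities remains. First I would recall the inclusion noted just before inequality (\ref{eq:nonsigineq}): every quantum joint strategy is in particular a non-signalling generalised strategy, realised by the identification
\[g_i(k,t_i)(a,b) = \braket{\psi|(A^{k,\pi_A(t_{i-1})}_a A^k_{t_{i-1}})\otimes(B^{\pi_B(t_{i-1})}_b B_{t_{i-1}})|\psi}.\]
The non-signalling conditions of Definition \ref{def:genstrat} follow because Alice's and Bob's projections act on the separate tensor factors $\HH_\A$ and $\HH_\B$: summing over Alice's outcome collapses $\sum_a A^{k,\cdot}_a = I$ on the $\HH_\A$ side, leaving a marginal that does not depend on Alice's history, and symmetrically for Bob. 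Since the induced generalised strategy reproduces exactly the same distribution on Bob's traces, this gives $\LL_\V^*(\C) \leq \LL_\V^\ns(\C)$, which is the left-hand half of (\ref{eq:nonsigineq}).

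Next, suppose $\LL_\V(\C) = 0$. By Theorem \ref{thm:nonsig} this forces $\LL_\V^\ns(\C) = 0$, and hence $\LL_\V^*(\C) = 0$ by the inequality just noted, together with the trivial lower bound $\LL_\V^*(\C) \geq \LL_\V(\C) = 0$ observed immediately after Definition \ref{def:entcap} (needed only to exclude a spurious negative value). That is the whole argument.

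I do not expect any real obstacle here: the only step that merits a sentence of care is confirming that a quantum strategy genuinely satisfies the non-signalling conditions, but this is the familiar fact that entanglement cannot signal, and the displayed formula reduces it to a one-line computation using $\sum_a A^{k,\cdot}_a = I$, $\sum_b B^{\cdot}_b = I$ and the tensor-product structure. Everything nontrivial — in particular the linear-algebraic argument routed through the classical partial trace of Lemma \ref{lem:nonsig} — is already contained in the proof of Theorem \ref{thm:nonsig}, so this corollary is genuinely a two-line deduction.
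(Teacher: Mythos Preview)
Your proposal is correct and is exactly the approach the paper takes: the corollary is deduced in one line by combining Theorem~\ref{thm:nonsig} with inequality~(\ref{eq:nonsigineq}). Your additional sentence verifying the non-signalling conditions for quantum strategies is just an expansion of what the paper already records before~(\ref{eq:nonsigineq}), so there is no divergence.
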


\section{Non-computability of entangled capacity}\label{sec:noncomp}

In this section we will show that the problem of computing the entangled 
capactity of a given channel, even approximately, is RE-complete---that is, as 
hard as the halting problem.

The key ingredient is Theorem \ref{thm:mipstar}, the recent breakthrough result
of Ji, Natarajan, Vidick, Wright and Yuen which shows that computing the
entangled value of a given non-local game is RE-complete. This was formerly a
notorious open problem, because a proof of undecidability would resolve in the
negative Tsirelson's problem (asking whether the `commuting operator'
model---see Section \ref{sec:sdp}---could produce the same correlations as the
tensor product model described in Section \ref{sec:nonlocal}), which was known
to be equivalent~\cite{junge2011connes} to a famous open problem in the theory
of operator algebras, the `Connes embedding problem', open since
1976~\cite{connes1976classification}.

\begin{theorem}[\cite{ji2020mipre}, Theorem 12.7]\label{thm:mipstar}
The problem of approximating $\val^*(\gG)$ for a given $\gG$ is 
RE-complete.  More precisely, the problem of determining whether a given 
$\gG$ has $\val^*(\gG)=1$ or $\val^*(\gG) \leq 1/2$, 
given that one of these is the case, is RE-complete.
\end{theorem}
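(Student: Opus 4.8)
Theorem~\ref{thm:mipstar} is a cited result that I would not attempt to reprove here: it is the $\mathrm{MIP}^*=\mathrm{RE}$ theorem of Ji, Natarajan, Vidick, Wright and Yuen, and my plan is to invoke it as a black box. It is worth recording the shape of the argument, though, since it explains why the theorem is an \emph{exact} characterisation rather than merely a hardness result. The `easy' containment --- that the promise problem $\val^*(\gG)=1$ versus $\val^*(\gG)\leq 1/2$ lies in $\mathrm{RE}$ --- follows by brute-force search: enumerate finite-dimensional quantum strategies drawn from a fixed countable dense set, compute $\val(\gG,s)=\sum_{a,b,x,y}\mu(a,b)\braket{\psi|A^a_x\otimes B^b_y|\psi}D(a,b,x,y)$ for each, and halt as soon as one exceeds $1/2$. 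Since $\val^*(\gG)$ is by definition the supremum over finite-dimensional strategies, a game with $\val^*(\gG)=1$ admits for every $\varepsilon>0$ a strategy of value at least $1-\varepsilon>1/2$ and the search terminates, whereas a game with $\val^*(\gG)\leq 1/2$ never yields such a strategy; hence `$\val^*(\gG)>1/2$' is semi-decidable.

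The substantive direction is $\mathrm{RE}$-hardness: given a Turing machine $M$ one must construct a game $\gG_M$ of size polynomial in $|M|$ with $\val^*(\gG_M)=1$ if $M$ halts and $\val^*(\gG_M)\leq 1/2$ otherwise. The route is a \emph{gap-preserving compression theorem}: a transformation sending a game on questions of length $n$ to a game on questions of length $\mathrm{polylog}(n)$ whose entangled value is $1$ (respectively $\leq 1/2$) exactly when the value of an exponentially-more-powerful verifier for the original game is $1$ (respectively $\leq 1/2$). Iterating this compression lets a verifier of bounded description size drive a computation of unbounded length, which is precisely what encodes the halting problem. The compression is assembled from introspection (the provers sample their own questions), question reduction via low-degree encodings, answer reduction via probabilistically checkable proofs, and an oracularisation step, all resting on the quantum soundness of low-degree and Pauli-basis tests and ultimately on $\mathrm{NEEXP}\subseteq\mathrm{MIP}^*$.

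The main obstacle at every stage is \emph{quantum soundness}: one must argue that any strategy for a compressed game which beats the honest value can be rounded to a comparably-good strategy for the larger game, and in the entangled setting this needs a stability analysis controlling how far near-optimal measurements can sit from the intended Pauli operators --- it is this self-testing estimate, not the combinatorial PCP bookkeeping, that is the crux. None of this is needed in detail below; what the present paper uses is only the clean statement of Theorem~\ref{thm:mipstar}, and the remaining work is to reduce the promise problem for $\val^*(\gG)$ to the corresponding promise problem for $\LL_\V^*(\C)$, via a channel that plays $\gG$ and rewards a win, much as $\C_\CHSH$ plays the CHSH game in Theorem~\ref{thm:quantad}.
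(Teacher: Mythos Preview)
Your proposal is correct and matches the paper's treatment: Theorem~\ref{thm:mipstar} is not proved in the paper but cited as the $\mathrm{MIP}^*=\mathrm{RE}$ result of~\cite{ji2020mipre}, to be used as a black box exactly as you say. Your additional sketch of the RE containment and the compression argument is accurate supplementary context, but the paper itself offers no such discussion and simply invokes the theorem before proceeding to the reduction via $\C_\gG$.
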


In order to apply this to entangled channel capacity, we show how to associate
to any non-local game $\gG$ a channel $\C_\gG$ such that the entangled capacity
of $\C_\gG$ and the entangled value of $\gG$ are related by an explicit formula.
This shows that the problem of computing entangled values of games is reducible
to the problem of computing entangled capacity of channels, which shows that the
latter is also RE-complete.

Informally, for a game $\gG$, we will define $\C_\gG$ as the channel that does 
the following:
\begin{enumerate}
\item Send messages $x\in \XX$ and $y\in \YY$ to Alice and Bob respectively, drawn according 
to the distribution $\mu$
\item Receive messages $a\in \A$ and $b\in \B$ from Alice and Bob respectively, 
together with a bit $u\in \{0,1\}$ from Alice
\item If $D(x,y,a,b)=1$ send the bit $u$ to Bob; otherwise send Bob a uniformly 
random bit.
\end{enumerate}

\begin{definition}
For a game $\gG=(\A,\B,\XX,\YY,D,\mu)$, define the 2-round abstract interactive channel $\C_\gG$ to 
comprise the tuple of finite sets $(\A\times \{0,1\},\B,\XX,\YY\times\{0,1\})$ 
and the functions $(f_1,f_2)$, where 
\[f_1((a,u),b)(x,(y,v)) = \mu(x,y) 1_{v=0},\]
and 
\begin{align*}f_2(((a_1,u_1),b_1,&x_1,(y_1,v_1)),((a_2,u_2),b_2)) 
= \\ &(x_0,(y_0,u_1\oplus X (1-D(x_1,y_1,a_2,b_2)))),
\end{align*}
where $X\sim \text{Ber}(1/2)$ and $x_0$ and $y_0$ are arbitrary fixed elements 
of $\XX$ and $\YY$ respectively.
\end{definition}

The main theorem of this section is that the entangled capacity of $\C_\gG$ is 
that given by the obvious strategy of setting $u$ equal to the value of the 
secret and following an optimal strategy for $\gG$.\footnote{Or rather strictly 
speaking the supremum of strategies corresponding to near-optimal strategies for 
$\gG$.}

\begin{theorem}\label{thm:gameequiv}
Let $\gG$ be a game, and $\V$ a vulnerability measure.  Then
\[\LL_\V^*(\C_\gG) = \V(\Ber((1+\val^*(\gG))/2))-\V(\Ber(1/2)).\]
\end{theorem}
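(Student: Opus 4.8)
The plan is to prove the two inequalities separately. For the lower bound $\LL_\V^*(\C_\gG) \geq \V(\Ber((1+\val^*(\gG))/2))-\V(\Ber(1/2))$, I would take a near-optimal quantum strategy for $\gG$ with state $\ket{\psi}$ and measurements $\{A^a_x\}$, $\{B^b_y\}$ achieving win probability $p\leq\val^*(\gG)$, and lift it to a joint strategy for $\C_\gG$: let $K\sim U(\{0,1\})$, have Alice ignore the first round, and in the second round use the message $x_1$ she received to select her measurement $A^{x_1}$, sending the outcome $a_2$ together with the bit $u_1=k$ (absorbed into round-1 message $u_1$, or equivalently set $u_1$ by deterministic post-processing of $k$). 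Bob uses $y_1$ to select $B^{y_1}$ and sends outcome $b_2$. Then the bit he receives in round 2 is $k$ with probability $p + (1-p)/2 = (1+p)/2$, so his posterior on $K$ is $\Ber((1+p)/2)$ and the leakage is $\V(\Ber((1+p)/2))-\V(\Ber(1/2))$. Taking the supremum over near-optimal strategies and using continuity of $\V$ on Bernoulli distributions (which follows, e.g., from the second healthiness condition together with monotonicity, or can be assumed) gives the bound with $p=\val^*(\gG)$.

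For the upper bound, the key reduction is: any joint strategy $s$ for $\C_\gG$ achieving leakage $L$ yields a strategy for $\gG$ whose value is at least the ``effective win probability'' that Bob can decode. I would argue that the channel from $K$ to Bob's view factors as a Markov chain $K \to (\text{transmitted bit}) \to \pi_B(T)$, where the transmitted bit is $u_1 \oplus X(1-D)$. More carefully, conditioning on the round-1 transcript, the only $K$-dependence reaching Bob is through this single bit. By the third healthiness condition (a binary symmetric channel is best used with a uniform input bit) together with the composition inequality, the leakage is at most $\V(\Ber(1-q))-\V(\Ber(1/2))$ where $1-q$ is the probability that the transmitted bit equals $K$ under the best choice of input — but since the channel adds noise with probability $(1-D)/2$, this equals $(1+r)/2$ where $r$ is the probability that $D(x_1,y_1,a_2,b_2)=1$ in the induced play. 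The crucial point is then that the marginal measurements Alice and Bob apply in round 2, together with the shared state, constitute a legitimate quantum strategy for $\gG$ (Alice's measurement depends on $x_1$, Bob's on $y_1$, which is exactly the game's input structure), so $r \leq \val^*(\gG)$. Combining, $L \leq \V(\Ber((1+\val^*(\gG))/2))-\V(\Ber(1/2))$.

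The main obstacle is the upper-bound direction, specifically making rigorous the claim that Bob's leakage is governed by a single noisy bit whose correctness probability is bounded by $\val^*(\gG)$. Two subtleties need care. First, Alice's strategy for $\C_\gG$ is allowed to depend on $K$ \emph{and} on the round-1 message $x_1$, and her measurement choice could in principle entangle these in a way that doesn't immediately look like a game strategy; I would handle this by fixing the value of $K$ and noting that for each fixed $k$ the round-2 measurements $\{A^{k,x_1}\}_{x_1}$ form a game strategy, then arguing that the best Bob can do is distinguish the $k=0$ and $k=1$ cases, which via the healthiness conditions reduces to a single BSC whose crossover is controlled by how often $D$ fires — and $D$ firing is exactly winning $\gG$. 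Second, I must rule out Alice gaining by \emph{not} setting $u_1=k$ faithfully or by using $X$'s randomness cleverly; but since $X$ is internal channel noise independent of everything, and since post-processing of $k$ into $u_1$ only loses information (composition inequality), the faithful choice $u_1 = k$ with $K \sim \Ber(1/2)$ is optimal, which is where the third healthiness condition is invoked. Once these points are pinned down the rest is the routine Bayes/Markov-chain bookkeeping already illustrated in the proof of Theorem~\ref{thm:quantad}.
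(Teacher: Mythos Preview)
Your lower bound argument matches the paper's exactly.

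The upper bound has a genuine gap. You correctly identify the obstacle---that Alice's game-winning probability may depend on $K$---but you do not resolve it. You write that the channel ``reduces to a single BSC whose crossover is controlled by how often $D$ fires,'' but when the win event $W$ depends on $K$, the chain $K \to U \to V$ is \emph{not} Markov (the paper notes this explicitly: only $K \to (U,W) \to V$ is Markov), so there is no fixed BSC from $U$ to $V$ through which to push the composition inequality. Your proposed fix---``for each fixed $k$ the round-2 measurements form a game strategy''---gives you $p_k := P(W=1\mid K=k) \le \val^*(\gG)$ for each $k$, but that by itself does not produce a single BSC bound on $I_\V(K;V)$.

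What the paper does, and what your sketch is missing, is an explicit construction of an auxiliary variable $U'$ that absorbs the $K$-dependence of $W$. Setting $p=\max_k p_k$, one observes $V\mid K=k \sim \Ber(\rho_k)$ with $\rho_k \in [(1-p)/2,(1+p)/2]$, and then \emph{defines} $U'\mid K=k \sim \Ber((p+2\rho_k-1)/2p)$ so that $V = U' \oplus \Ber((1-p)/2)$ reproduces the correct conditionals. This makes $K \to U' \to V$ Markov with $(U',V)$ a genuine BSC of error $(1-p)/2$, and only then do the composition inequality and the third healthiness condition yield the bound. Without this (or an equivalent) construction, your reduction to ``a single BSC'' is an assertion rather than an argument. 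Your treatment of the second subtlety (whether $u_1=k$ is optimal) is also phrased from the wrong direction: for the upper bound you must show no choice of $u_1$ beats the bound, not that $u_1=k$ is the best choice---but this is subsumed by the $U'$ construction once you have it.
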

\begin{proof}
We prove separately matching upper and lower bounds for $\LL_\V^*(\C_\gG)$.  The 
lower bound is trivial: given an entangled strategy for $\gG$ achieving win 
probability $p=\val^*(\gG)-\epsilon$, set $K\sim \text{Ber(1/2)}$ and have 
Alice and Bob execute the strategy for $\gG$, with Alice sending the value of $K$ 
as her additional bit $u$.

Conditional on observing the value $v$, the posterior 
probability that $K=v$ is \[\frac{p/2+(1-p)/4}{p/2 + (1-p)/2}=(1+p)/2.\]  Hence 
the posterior vulnerability is $\V(\Ber((1+p)/2))$, and this strategy 
achieves leakage $\V(\Ber((1+p)/2))-\V(\Ber(1/2))$, as required.

For the upper bound, let $s$ be a strategy for $\C_\gG$ achieving leakage 
$l$.  Let the random variable $U$ be the bit sent by 
Alice, $V$ the bit received by Bob and $W=D(x,y,a,b)$, the event that they `win' 
the game.

By considering each $s(k)$ as a strategy for $\gG$, we have that 
\[p=\max_k p_{W|K}(1|k) \leq \val^*(\gG).\]

Now, we have that $K\rightarrow (U,W) \rightarrow V$ is a Markov chain, but since 
the event that $W=1$ may depend on $K$ in an uncontrolled way we do not have 
that $K\rightarrow U \rightarrow V$ is a Markov chain.  Our 
strategy will be to show that the dependence of $V$ on $K$ can be `factored 
through' a random variable $U'$ so that $K\rightarrow U' \rightarrow V$ is a 
Markov chain and $(U',V)$ is a binary symmetric channel with error probability 
$(1-p)/2$.

Indeed, for $k\in \KK$ we must have $V|K=k \sim \Ber(\rho_k)$ for some $\rho_k$ 
(this follows just from that fact that $V|K=k$ is a $\{0,1\}$-valued random 
variable).  In particular, we have 
\begin{align}
\rho_k &=
p_{V|K}(1|k)\nonumber\\ &= p_{W|K}(1|k)p_{U|K}(1|k) \nonumber \\
&\qquad\qquad +p_{W|K}(0|k)\frac{p_{U|K}(1|k)+p_{U|K}(0|k)}{2} \nonumber\\
&=p_{W|K}(1|k)p_{U|K}(1|k) + \frac{p_{W|K}(0|k)}{2} \nonumber\\
&\in \left[\frac{p_{W|K}(0|k)}{2},1-\frac{p_{W|K}(0|k)}{2}\right].\label{eq:rhok}
\end{align}

Now, putting \[U'|K=k\sim \Ber\left(\frac{p+2\rho_k-1}{2p}\right)\] and 
\[V=U'\oplus\Ber\left(\frac{1-p}{2}\right)\] independently of $K$, we have that 
$V|K=k\sim \Ber(\rho_k)$ for all $k$, as required (since one can check that the 
xor of Bernoulli random variables with parameters $(p+2\rho_k-1)/2p$ and $(1-p)/2$ 
is a Bernoulli random variable with parameter $\rho_k$).  Note that by 
(\ref{eq:rhok}) and the fact that $p_{W|K}(0|k)=1-p_{W|K}(1|k) \geq 1-p$, we have 
that $0\leq (p+2\rho_k-1)/2p \leq 1$.

Now since $K\rightarrow U' \rightarrow 
V$ forms a Markov chain and $(U',V)$ is a binary symmetric channel with error 
probability $(1-p)/2$, by the composition inequality for 
$\V$ we have that
\begin{align*}l=I_\V(K;V) &\leq \sup_{V'|U''=V|U'}I_\V(U'';V') \\
&\leq \V(\text{Ber}((1+p)/2)) - \V(\text{Ber}(1/2))\\
&\leq \V(\text{Ber}((1+\val^*(\gG))/2)) - \V(\text{Ber}(1/2)),\end{align*}
as required.
\end{proof}

Note that Theorem \ref{thm:quantad} is a special case of the lower bound in 
Theorem \ref{thm:gameequiv}, with $\gG=\gG_\CHSH$.

Combining Theorem \ref{thm:gameequiv} with Theorem \ref{thm:mipstar} gives the result that 
computing the entangled capacity of a given channel is undecidable.  Note that 
the gapped problem is clearly in RE, because we can explicitly enumerate 
entangled strategies and accept if we find one with capacity above the lower 
threshold.

\begin{theorem}\label{thm:capnoncomp}
Let $\V$ be a vulnerability measure.  The problem of determining whether a given 
channel $\C$ has $\LL_\V^*(\C) \geq \delta^+_\V$ or $\LL_\V^*(\C) \leq 
\delta^-_\V$, given that one of these is the case, is RE-complete, where 
$\delta^+_\V > \delta^-_\V$ are the constants
\begin{align*}
\delta^+_\V &= \V(\text{Ber}(1)) - \V(\text{Ber}(1/2))\\
\delta^-_\V &= \V(\text{Ber}(3/4) - \V(\text{Ber}(1/2)).
\end{align*}
\end{theorem}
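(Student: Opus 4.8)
The plan is to reduce directly from the gapped version of $\text{MIP}^* = \text{RE}$ stated in Theorem \ref{thm:mipstar}, using the channel construction $\gG \mapsto \C_\gG$ and the exact formula from Theorem \ref{thm:gameequiv}. Given a game $\gG$ which is promised to satisfy either $\val^*(\gG) = 1$ or $\val^*(\gG) \leq 1/2$, I would compute the channel $\C_\gG$ (this is clearly a computable function of the description of $\gG$), and observe that by Theorem \ref{thm:gameequiv} we have $\LL_\V^*(\C_\gG) = \V(\Ber((1+\val^*(\gG))/2)) - \V(\Ber(1/2))$. In the first case this equals $\V(\Ber(1)) - \V(\Ber(1/2)) = \delta^+_\V$, and in the second case $\val^*(\gG) \leq 1/2$ gives $(1 + \val^*(\gG))/2 \leq 3/4$, so by the second healthiness condition (monotonicity of $\V(\Ber(\cdot))$ in distance from $1/2$) we get $\LL_\V^*(\C_\gG) \leq \V(\Ber(3/4)) - \V(\Ber(1/2)) = \delta^-_\V$. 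Since $3/4$ is strictly closer to $1/2$ than $1$ is, the same healthiness condition gives $\delta^-_\V < \delta^+_\V$, so the promise is preserved and we have a many-one reduction from the RE-complete problem of Theorem \ref{thm:mipstar} to our gapped capacity problem; hence the latter is RE-hard.

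For membership in RE, I would argue as the remark preceding the statement indicates: to recognise the instances with $\LL_\V^*(\C) \geq \delta^+_\V$, enumerate all finite-dimensional entangled joint strategies $s \in \Ss^*_{\C,K}$ over a countable dense set (e.g. strategies on $\mathbb{C}^d$ for all $d$ with projections having entries in $\mathbb{Q}[i]$, and secrets $K$ on finite sets with rational distributions), compute $\LL_\V(K,(\C,s))$ to sufficient precision, and halt and accept as soon as one is found exceeding $\delta^-_\V$ (which, under the promise, witnesses the $\geq \delta^+_\V$ case); this procedure halts precisely on the yes-instances. Combining RE-hardness with membership in RE gives RE-completeness.

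The main obstacle — and really the only nontrivial point — is ensuring the promise gap survives the transformation: one must check that $\delta^-_\V < \delta^+_\V$ strictly, and more subtly that $\val^*(\gG) \leq 1/2$ really does force $\LL_\V^*(\C_\gG) \leq \delta^-_\V$ and not just $< \delta^+_\V$. Both follow cleanly from the second healthiness condition once we note $|1 - 1/2| > |3/4 - 1/2| > 0$ and that $x \mapsto (1+x)/2$ is increasing, so $\val^*(\gG) \leq 1/2 \Rightarrow (1+\val^*(\gG))/2 \in [1/2, 3/4]$ and monotonicity of $\V(\Ber(\cdot))$ on $[1/2,1]$ applies. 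A secondary subtlety worth a sentence is that Theorem \ref{thm:gameequiv} gives an exact equality (the supremum defining $\LL_\V^*$ is attained in the limit by near-optimal game strategies), so there is no slack to worry about on the $\val^*(\gG) = 1$ side; the value $\delta^+_\V$ is hit exactly. With these observations the reduction is immediate and the proof is short.
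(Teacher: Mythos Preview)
Your proposal is correct and follows essentially the same approach as the paper: RE-hardness by reducing from Theorem~\ref{thm:mipstar} via the channel $\C_\gG$ and the formula of Theorem~\ref{thm:gameequiv}, and RE-membership by enumerating entangled strategies. You have in fact spelled out more carefully than the paper does the role of the second healthiness condition in preserving the gap and guaranteeing $\delta^-_\V < \delta^+_\V$.
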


Note, for instance, that for min-entropy vulnerability we have $\delta^+_\V = 
\log_2 2 = 1$ and $\delta^-_\V = \log_2 (3/2) \approx 0.58$, and for Shannon 
entropy vulnerability we have $\delta^+_\V = H_2(1/2)-H_2(1)=1$ and $\delta^-_\V 
= H_2(1/2) - H_2(3/4)\approx 0.19$.

\section{SDP upper bounds}\label{sec:sdp}

In this section, we will show how upper bounds for entangled capacity, under the
min-entropy vulnerability measure $\V=-H_\infty$, may be obtained using
semidefinite programming. This is by analogy to a similar
method~\cite{navascues2008convergent} for non-local games.  We first introduce 
semidefinite programming and outline the technique of~\cite{navascues2008convergent}, 
and then show how it may be adapted to obtain bounds on entangled min-entropy channel 
capacity.

\subsection{The SDP hierarchy}

\emph{Semidefinite programming} (SDP)~\cite{wolkowicz2012handbook} is a technique from numerical
optimisation. A semidefinite progamming problem (for us; there are many
equivalent formulations) is specified by an \emph{objective} matrix $A$ and
\emph{constraints} given by matrices $A_1,\ldots,A_n$ and scalars $a_1,
\ldots,a_n$, and consists of the following optimisation:
\begin{align*}
\text{maximise } &A \bullet M \\
\text{subject to } & M \succcurlyeq 0 \\
&A_i \bullet M = a_i \qquad \forall i,
\end{align*}
where $\bullet$ is the Frobenius product $M\bullet N = \sum_{ij} M_{ij}N_{ij}$,
and $M\succcurlyeq 0$ means that $M$ is \emph{positive semidefinite}---that is,
$M$ is Hermitian with all its eigenvalues non-negative; equivalently, $M$ is
Hermitian and we have $v^*Mv \geq 0$ for any vector $v$. We say that a problem
is \emph{feasible} if there exists an $M$ satisfying the constraints (ignoring
the objective). The benefit of formulating a problem in this way is that the
optimisation can be performed (to specified precision) in polynomial time, and
indeed in a way which is usually efficient in practice.  Note that it is easy 
to show that within this form we may introduce additional scalar variables 
together with arbitrary linear equality or inequality constraints with the
entries of $M$, and we will allow ourselves to do this freely below.

The \emph{SDP hierarchy}, introduced in the seminal 
paper~\cite{navascues2008convergent}, uses SDP to obtain an infinite sequence of 
stronger and stronger constraints on quantum behaviours, which importantly are 
tight in the limit: that is, if a behaviour is not quantum then this will be 
detected at some finite level of the hierarchy.  The catch is that the notion 
of `quantum' used is not the usual one of Alice and Bob having their own parts 
of the system, but rather that they share some infinite-dimensional Hilbert 
space, and the only constraint is that all of Alice's measurements should 
commute with all of Bob's.  This is called the `commuting operator' model, and 
is clearly a generalisation of the usual tensor product model (since if the 
system takes the form $\HH_\A\otimes \HH_\B$ with Alice and Bob's measurements 
being only on $\HH_\A$ and $\HH_\B$ respectively then clearly their measurements 
commute); Tsirelson's conjecture asserted that the two models were equivalent, 
but this was refuted as a consequence of the recent result $\text{MIP}^*=\text{RE}$~\cite{ji2020mipre}.

More concretely, a \emph{behaviour} means a collection of probability
distributions $P(x,y)\in \D(\A\times \B)$ for each $(x,y)\in (\XX\times\YY)$,
for some finite sets $\A,\B,\XX,\YY$. We want to determine whether or not there
exists some complex Hilbert space $\HH$ (not necessarily finite-dimensional), a
state $\psi\in \HH$ and measurements $\{E^x_a\}_{a\in\A}$, $\{E^y_b\}_{b\in\B}$
for each $x\in \XX$ and $y\in \YY$ (for Alice and Bob respectively) such that
$E^x_a$ and $E^y_b$ commute for every $x,a,y,b$, and such that we have
\[P(x,y)(a,b) = \braket{\psi|E^x_aE^y_b|\psi}.\]

Suppose that such a set of measurements does exist.  Then we can consider a 
matrix $\Gamma$ whose rows and columns are indexed by formal products of our 
operators $E^x_a,E^y_b$, and whose entries are given by 
\[\Gamma_{S,T} = \braket{\psi|S^\dag T|\psi}.\]
This matrix is positive semidefinite, since for any 
vector $v=\sum_S v_S e_S$ (with basis vectors $e_S$) we have 
\begin{align*}
v^\dag \Gamma v &= \sum_{S,T} v_{S}^* \braket{\psi|S^\dag T|\psi} v_{T} \\
&= \braket{\psi|({\textstyle \sum_S} v_S S)^\dag ({\textstyle \sum_T} v_T T)|\psi} \\
&= \left\lVert {\textstyle \sum_T} v_T T \ket{\psi} \right\rVert^2 \\
&\geq 0.
\end{align*}

The matrix $\Gamma$ is infinite, so to formulate a finitary SDP problem we must
take a finite subset of its rows and columns: let the matrix $\Gamma_i$ consist
of those rows and columns of $\Gamma$ corresponding to formal products of at
most $i$ operators. We consider the problem whose constraints are that
$\Gamma_i\succcurlyeq 0$, together with additional constraints on the entries of
$\Gamma_i$ arising from the commutativity, orthogonality and idempotence properties of the
operators, and also from the desired values of
$P(x,y)(a,b)=\braket{\psi|E^x_aE^y_b|\psi}$. Clearly the $\Gamma_i$ arising from
a set of measurements realising the behaviour will be a feasible solution to
this problem, but the highly non-trivial main theorem
of~\cite{navascues2008convergent} (Theorem 8) is that the converse is also true:
if the problem is feasible for all $i$ then a suitable set of measurements
exists. Hence in particular if the behaviour is not quantum then we will find
that the problem is infeasible for some $i$.

Instead of specifying a fixed behaviour, we can also formulate some objective 
as a function of the $\braket{\psi|E^x_aE^y_b|\psi}$ (or rather the corresponding 
entries of $\Gamma_i$), and then the optimal values for increasing $i$ will give 
a sequence of tighter and tighter bounds, converging to the true optimum in the 
commuting operator model.  This is the main technique for bounding the entangled 
value of non-local games (see Section \ref{sec:nonlocal}), and this is what 
we will adapt below to obtain bounds on entangled min-entropy capacity.

\subsection{Bounds on min-entropy capacity}

As discussed above, to apply SDP techniques we will need to consider min-entropy 
capacity in the commuting operator model, which is stronger than the entangled 
model (Definition \ref{def:entcap}) but weaker than the non-signalling 
model (Definition \ref{def:genstrat}).  Whereas in the entangled model we 
specified that the Hilbert space 
on which Alice and Bob made their measurements could be separated into a part 
$\HH_\A$ held by Alice and a part $\HH_\B$ held by Bob, we will now drop this 
assumption and assume only that the measurements made by Alice commute with 
those made by Bob.  

\begin{definition}
Let $\C$ be an $n$-IC, and $K$ a random variable taking values on the set $\KK$.
A \emph{commuting operator joint strategy} for $\C$ is a pure state $\ket{\psi}$
in a (possibly infinite-dimensional) complex Hilbert space $\HH$, and sets
$\{A^{k,t}\}$ and $\{B^{t'}\}$ such that
\begin{enumerate}[(i)]
\item for every $k\in \KK$ and every $t\in (\A\times\XX)^{i}$ with $0\leq i < n$, 
$A^{k,t}=\{A^{k,t}_a\}_{a\in\A}$ is a measurement over $\HH_\A$,  
\item for every $t'\in (\B\times \YY)^i$ with $0\leq i < n$, 
$B^{t'}=\{B^{t'}_b\}_{b\in\B}$ is a measurement over $\HH_\B$, and
\item for every $k,t,t',a,b$ we have $A^{k,t}_aB^{t'}_b = B^{t'}_bA^{k,t}_a$.
\end{enumerate}
Denote the space of such strategies by $\Ss^\co_{\C,K}$.
\end{definition}

As usual we can define $\V$-leakage and $\V$-capacity $\LL_\V^\co(\C)$.  Note 
that any entangled strategy is trivially a commuting operator strategy; on the 
other hand a commuting operator strategy is still non-signalling and so we have
\begin{equation}\label{eq:commop}
\LL_\V^\ns(\C) \geq \LL_\V^\co(\C) \geq \LL_\V^*(\C) \geq \LL_\V(\C).
\end{equation}

The basic idea is that as before we consider a matrix $\Gamma$ with entries 
$\braket{\psi|S^\dag T|\psi}$, where $S$ and $T$ are formal products (of bounded 
length) of the operators defining our strategy.  The additional ingredient is 
that we are able to express the objective of min-entropy capacity as a linear 
function of the entries of $\Gamma$, or rather more precisely as a linear function of 
additional scalar variables which are subject to linear constraints.  This is 
done using the formula for min-entropy capacity given as Proposition 5.1 
of~\cite{braun2009quantitative}.

This formula states that if we have a (non-interactive) channel defined by a 
conditional probability matrix $p_{Y|X}$ then we have 
\begin{equation}\label{eq:mincap}\sup_{X} I_{-H_\infty}(X; Y) = \sum_{y\in\YY} 
\max_{x\in \XX} p_{Y|X}(y|x),\end{equation}
where the supremum is over probability distributions for $X$, with $Y$ obeying 
the conditional probabilities $p_{Y|X}(y|x)$.

Fixing strategies (i.e. sets of operators) for Alice and Bob fixes the
conditional distribution $\pi_B(T_{\phi_A(K),s_B})|K$, whose values we will see
can be expressed as linear functions of the entries of $\Gamma$. We then have
that $\LL_{-H_\infty}^\co(\C)$ is the supremum of
$I_{-H_\infty}(K;\pi_B(T_{\phi_A(K),s_B})$ over all choices of strategies and
all distributions for $K$, so in particular by (\ref{eq:mincap}) the capacity
corresponding to a given choice of strategies is given by
\[\sum_{t\in (\B\times\YY)^n} \max_k p_{\pi_B(T_{\phi_A(K),s_B})|K}(t|k).\]

Note that $\max$ is not a linear (or indeed convex) relation, and so this cannot 
be expressed directly in our SDP problem.  However, since $\KK$ and 
$(\B\times\YY)^n$ are finite sets, we can just exhaust over `guessing functions' 
$g:(\B\times\YY)^n\rightarrow \KK$, with the SDP for each maximising 
$\sum p(t|g(t))$.

Let $\C$ be an $n$-IC, $\KK$ a finite set and $g:(\B\times\YY)^n\rightarrow \KK$.  
Define the semidefinite programming problem $\mathcal{P}_i(\C,\KK,g)$ by the 
following variables:
\begin{itemize}
\item a matrix $\Gamma$, with entries $\Gamma_{S,T}$ for all strings 
$S,T$ in symbols $A_a^{k,t}$, $B_b^{t'}$, 0 and 1 of length at most $i$; $\Gamma_{S,T}$
represents $\braket{\psi|S^{\dag}T|\psi}$
\item variables $p_{t|k}$ for each trace $t\in (\B\times \YY)^n$ and each $k\in 
\KK$, representing the probability of observing trace $t$ conditional on the 
secret value $K=k$,
\end{itemize}
and objective
\[\text{maximise } \sum_{t\in(\B\times\YY)^n} p_{t|g(t)}.\]
The first constraints arises from the fact that all of the $p_{t|k}$ represent 
probabilities, and as discussed above the matrix $\Gamma$ is positive semidefinite.
\begin{itemize}
\item for all $t,k$, $0\leq p_{t|k} \leq 1$
\item $\Gamma \succcurlyeq 0$
\end{itemize}
The next constraints arise from the properties that $\Gamma$ should have if it 
is the matrix arising from some set of operators: for instance we will have 
$\braket{\psi|I|\psi}=1$ and $\braket{\psi|0|\psi}=0$.  More generally, if 
strings $S,T$ and $U,V$ are such that (interpreting the strings as products of 
operators) the orthogonality, idempotence and commutativity properties force 
$S^\dag T=U^\dag V$ then we should have $\Gamma_{S,T}=\Gamma_{U,V}$.
\begin{itemize}
\item $\Gamma_{1,1}=1$ and $\Gamma_{0,0}=0$
\item $\Gamma_{S,T}=\Gamma_{U,V}$ whenever $S^\dag T = U^\dag V$ under the following
relations: $(A_a^{k,t})^\dag=A_a^{k,t}=(A_a^{k,t})^2$ and 
$(B_b^{t'})^\dag=B_b^{t'}=(B_b^{t'})^2$; $A^{k,t}_aB_b^{t'} = B_b^{t'} A^{k,t}_a$;
$A^{k,t}_a A^{k,t}_{a'} = 0$ for all $a'\neq a$ and $B_b^{t'} B_{b'}^{t'} = 0$ 
for all $b'\neq b$; and $S1=S=1S$ and $S0=0=0S$ for all $S$
\end{itemize}
The final constraints express that the $p_{t|k}$ are indeed the conditional 
probabilities according to the formula in Section \ref{sec:entangleddef}.  Note 
that these are linear in the SDP variables, since the $f_i(t_i)(x_i,y_i)$ are 
fixed constants.
\begin{itemize}
\item for all $s\in (\B\times \YY)^n$ and $k\in \KK$, we have
\[p_{t|k} = \sum_{s:t=\pi_B(s)} \Gamma_{A_s^k,B_s} \prod_{i=1}^n f_i(s_i)(x_i,y_i),\]
with $A_t^k$ and $B_t$ the products as defined in Section \ref{sec:entangleddef}.
\end{itemize}

Write $\opt(\mathcal{P}_i(\C,\KK,g))$ for the optimal value of 
$\mathcal{P}_i(\C,\KK,g)$, and let $\opt_i(\C,\KK) = \max_g 
\opt(\mathcal{P}_i(\C,\KK,g)$.  Then we 
have that this converges to the commuting operator min-entropy capacity of $\C$.

\begin{theorem}
Let $\C$ be an $n$-IC.  Then
\[\LL_{-H_\infty}^\co(\C) = \lim_{i\rightarrow \infty} 
\log \opt_i(\C,(\B\times\YY)^n).\]
\end{theorem}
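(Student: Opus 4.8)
The plan is to prove the two inequalities $\LL_{-H_\infty}^\co(\C) \leq \liminf_i \log\opt_i(\C,(\B\times\YY)^n)$ and $\LL_{-H_\infty}^\co(\C) \geq \limsup_i \log\opt_i(\C,(\B\times\YY)^n)$ separately, which together with monotonicity of $\opt_i$ in $i$ (each $\mathcal{P}_{i+1}$ refines $\mathcal{P}_i$ by adding rows and columns, so $\opt_{i+1}\leq\opt_i$) gives that the limit exists and equals the capacity. The first (``completeness'') direction is the easy one: given any commuting operator strategy $s\in\Ss^\co_{\C,K}$ with secret $K$ and any guessing function $g$, the Gram matrix $\Gamma_{S,T}=\braket{\psi|S^\dag T|\psi}$ (restricted to products of length at most $i$) together with the induced values $p_{t|k}=\sum_{s:t=\pi_B(s)}\braket{\psi|A_s^k B_s|\psi}\prod_i f_i(s_i)(x_i,y_i)$ satisfies all the constraints of $\mathcal{P}_i(\C,\KK,g)$ — positive semidefiniteness was checked in the excerpt's preamble, and the algebraic identity constraints hold because they hold for the actual operators. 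Hence $\opt_i(\C,\KK)\geq \sum_t \max_k p_{t|k} = I_{-H_\infty}(K;\pi_B(T))$ by the Braun–Kloft–Kawachi formula (\ref{eq:mincap}) applied with $g$ the optimal guesser; taking suprema over strategies and $K$ (and noting $|\KK|$ may be taken bounded, or simply that the statement's capacity uses $\KK=(\B\times\YY)^n$ as the worst case) gives $\log\opt_i \geq \LL_{-H_\infty}^\co(\C)$ for every $i$.

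The reverse (``soundness'') direction is where the real work lies, and it is essentially an application of the main theorem of~\cite{navascues2008convergent}. Suppose $l < \limsup_i \log\opt_i(\C,(\B\times\YY)^n)$; I would like to produce a commuting operator strategy achieving min-entropy leakage at least $l$. First fix the finite set $\KK=(\B\times\YY)^n$ and, since there are only finitely many guessing functions $g$, pass to one $g$ for which $\opt(\mathcal{P}_i(\C,\KK,g))$ stays large for infinitely many $i$ — in fact, because the sequence $\opt_i$ is nonincreasing and bounded, after fixing that $g$ we get a feasible solution $(\Gamma^{(i)}, p^{(i)})$ to $\mathcal{P}_i(\C,\KK,g)$ for every $i$ with objective value at least (a quantity tending to) $\lim_i \opt_i$. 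The point is that a family of solutions, one at each level $i$, that is consistent (the level-$i$ solution is the restriction of the level-$(i+1)$ one) and feasible at every level is exactly the hypothesis of~\cite{navascues2008convergent} Theorem 8, which then yields a Hilbert space $\HH$, a state $\ket\psi$, and genuine measurement operators $\{A^{k,t}_a\}$, $\{B^{t'}_b\}$ with $A^{k,t}_a B^{t'}_b = B^{t'}_b A^{k,t}_a$ realising the $\Gamma$-entries, hence a commuting operator joint strategy. The subtlety is obtaining a \emph{consistent} family: one standard route is a compactness/diagonalisation argument — the set of feasible $(\Gamma^{(i)},p^{(i)})$ at level $i$ is compact (entries bounded by $1$ in absolute value, constraints closed), so by a diagonal subsequence one extracts a limit point at each level whose restrictions cohere, exactly as in the original NPA argument.

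Once the commuting operator strategy is in hand, its conditional distribution $p_{\pi_B(T)|K}(t|k)$ equals the limiting $p_{t|k}$ by the final (linear) constraint of $\mathcal{P}_i$, so by (\ref{eq:mincap}) its min-entropy leakage with secret $K$ uniform on $\KK$ is $\sum_t \max_k p_{t|k} \geq \sum_t p_{t|g(t)} = \lim_i\opt_i \geq \exp(l)$ (in log terms, $\geq l$); hence $\LL_{-H_\infty}^\co(\C)\geq l$, and letting $l\uparrow$ gives the claim. The main obstacle I anticipate is precisely the extraction of the coherent family of SDP solutions to feed into~\cite{navascues2008convergent}: one must be careful that the additional scalar variables $p_{t|k}$ (not just the $\Gamma$-entries) converge along the same subsequence, and that restricting from level $i+1$ to level $i$ genuinely lands in the level-$i$ feasible set — both are routine but need the compactness argument spelled out. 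A secondary minor point is justifying that it suffices to consider the single secret set $\KK=(\B\times\YY)^n$: any larger $\KK$ cannot help because Bob observes a trace in $(\B\times\YY)^n$, so an optimal Bob-guesser factors through it, and by the deterministic-strategy and data-processing reductions already invoked for the classical case one may collapse $\KK$ to the set of distinct induced conditional distributions, of which there are at most $|(\B\times\YY)^n|$ many that matter for the min-entropy objective (\ref{eq:mincap}).
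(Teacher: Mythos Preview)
Your proposal is correct and follows essentially the same approach as the paper: completeness via the Gram matrix of a given commuting-operator strategy together with formula~(\ref{eq:mincap}), and soundness by fixing $g$ (finitely many choices), extracting a convergent subsequence of the $\Gamma^{(i)}$ and $p^{(i)}$ by compactness, and invoking Theorem~8 of~\cite{navascues2008convergent} to realise the limit as an actual commuting-operator strategy. The technical caveats you flag (coherence of the extracted family, convergence of the $p_{t|k}$ along the same subsequence, and the reduction to $\KK=(\B\times\YY)^n$) are exactly the points the paper addresses, in the same way.
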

\begin{proof}
First observe that without loss of generality we may take $\KK=(\B\times \YY)^n$: 
indeed, if $|\KK|>|(\B\times \YY)^n|$ then there will elements $k\in \KK$ 
which are never Bob's optimal guess after the interaction, and assigning 
probability to these elements in the prior clearly cannot increase min-entropy 
leakage.  The upper bound on $\LL_{-H_\infty}^\co(\C)$ is then immediate, since 
a commuting operator strategy for $\C$ gives a feasible solution for 
$\mathcal{P}_i(\C,(\B\times \YY)^n,g)$ for all $i$ as described above (with 
$g(t)=\argmax_k p_{\pi_B(T_{\phi_A(K),s_B})|K}(t|k)$).

The lower bound is more delicate. First note that there are only finitely many
possible values for $g$ and so by passing to a subsequence we may assume that
$g$ is fixed. We then proceed essentially by the proof of Theorem 8
of~\cite{navascues2008convergent}. This shows that if we have a sequence of
feasible solutions (say with optimal values $v_i$), whose $\Gamma$ matrices we
denote $\Gamma^i$, then (viewing the $\Gamma^i$ as living in the space of
infinite matrices whose entries are indexed by \emph{all} strings $S$ and $T$,
extending $\Gamma^i$ with zeros as necessary), there is a pointwise convergent
subsequence whose limit is (say) the infinite matrix $\Gamma^\infty$, and
moreover there exists an (infinite-dimensional) Hilbert space $\HH$, state
$\ket{\psi}\in \HH$ and collection of operators $A^{k,t}_a$ and $B^{t'}_b$ such
that $\Gamma^\infty_{S,T} =\braket{\psi|S^{\dag} T|\psi}$ for all $S,T$.

Now all of the $p_{t|k}$ are probabilities and so are contained in the compact
set $[0,1]$. Hence passing to a further subsequence we may assume that all the
$p_{t|k}$ converge, and by continuity of the constraints we have that their
limit, say $p_{t|k}^\infty$ is a valid set of conditional probabilites for the
strategy corresponding to the operators obtained in the previous paragraph, with
advantage $\sum_t p_{t|g(t)}^\infty = \lim_{i\rightarrow \infty} v_i$, as
required.
\end{proof}

Note that for $j>i$ we have that any solution for $\opt(\mathcal{P}_j(\C,\KK,g))$
restricts to a solution for $\opt(\mathcal{P}_i(\C,\KK,g))$ with the same value 
of the objective, and so the $\log \opt_i(\C,(\B\times\YY)^n)$ are a descending 
sequence of upper bounds for $\LL_{-H_\infty}^\co(\C)$.

By (\ref{eq:commop}), the $\log \opt_i(\C,(\B\times\YY)^n)$ also give upper
bounds for $\LL_{-H_\infty}^*(\C)$. By Theorem \ref{thm:capnoncomp} there exist
channels $\C$ such that $\LL_{-H_\infty}^\co(\C) > \LL_{-H_\infty}^*(\C)$ (since
otherwise we could simultaneously enumerate upper bounds from above and
entangled strategies from below) and so the upper bounds do not converge to the
true value of $\LL_{-H_\infty}^*(\C)$, but since the question of whether such
channels exist is equivalent to the Connes Embedding Problem which was open for
more than 40 years, it seems reasonable to expect that this will not arise in
practice.

\section{Conclusions}\label{sec:conclusion}

\subsection{Interactive channels and non-local games}

In this work we have shown that there is a close connection between the 
communication capacity of interactive channels and the value of non-local games.  
In particular in Theorem \ref{thm:gameequiv} we have shown that for every game 
there exists a channel such that the entangled capacity of the channel 
corresponds to the entangled value of the game (and the same argument would give 
corresponding results for other classes such as commuting operator or 
non-signalling strategies).

What about going the other way? For the particular case of min-entropy capacity,
it does seem that one could transform a channel into a (multi-round) game, 
essentially by having Bob guess the value of the secret at the end (modulo the 
technical issue of the prior probability distribution over secrets not being 
specified \emph{a priori}); one could do the same for other $g$-leakage 
measures, using randomness outside the control of Alice and Bob to represent 
rewards between 0 and 1.  On the other hand it is difficult to see how to do 
this for general vulnerability measures, including in particular Shannon 
entropy---how can one express this as simple acceptance or rejection of a 
transcript?

It thus seems that interactive channel capacity is in some sense a 
generalisation of non-local games, where non-local games correspond specifically 
to capacity with respect to $g$-leakage measures.  Since non-local games have 
given rise to such a beautiful and useful theory, it seems reasonable to wonder 
whether a similarly rich theory may be available for other leakage measures.  A 
promising starting point would seem to be the Shannon entropy measure, which is 
on the one hand a natural measure but on the other not (so far as we can tell) 
encompassed by non-local games.

\subsection{Quantum QIF}

The formulation of quantitative measures of information flow was the beginning, 
not the end, of the subject of QIF.  Similarly, although this paper formulates a 
definition of entangled information flow and addresses some fundamental 
theoretical questions, it certainly does not claim to answer all the questions 
which are necessary to assess to what extent information leakage assisted by 
entanglement may constitute a threat in practice.  In particular, the systems we 
have considered have mainly been rather artificial, constructed from non-local 
games specifically to have the properties we want.  In the future it will be 
necessary to analyse more realistic systems to determine whether they may be 
affected by entanglement.  This is likely to require handling less abstracted 
models than that described in Section \ref{sec:classflow}, and finding pragmatic 
algorithms which are more efficient in practical cases than that described in 
Section \ref{sec:sdp}.

Finally, we do not by any means intend to suggest that approaches in the style 
of \cite{americo2020qqif}, in which the channels themselves may be quantum, are 
anything other than equally important as future directions for QIF in the quantum
realm.  Quantum networks may well become extremely relevant in the near or 
medium-term future, and indeed quantum key distribution systems already exist.  
We hope that in the future it may be possible to extend the approach of this 
paper to that setting, perhaps by extending Fig \ref{fig:flowfig} to allow 
quantum states as messages.

\bibliographystyle{IEEEtran}
\bibliography{mybib}
\vspace{12pt}

\end{document}